\newtheorem{definition}{Definition}
\newtheorem{theorem}{Theorem}
\newtheorem{lemma}{Lemma}
\newtheorem{proposition}[theorem]{Proposition}
\theoremstyle{remark}
\title{Throw One's Cake --- and Eat It Too}
\author{Orit Arzi, Yonatan Aumann \& Yair Dombb \\ Bar-Ilan University\thanks{Department of Computer Science, Bar-Ilan University, Ramat Gan 52900, Israel.  Email addresses: \mbox{oritarzi1@gmail.com}, aumann@cs.biu.ac.il, yair.biu@gmail.com}}
\date{}
\begin{document}

\maketitle

``Envy, desire, and the pursuit of honor drive a person from the world.''
\begin{flushright}
\it{Chapters of Our Fathers 4:27} \qquad\qquad\qquad
\end{flushright}

\begin{abstract}
We consider the problem of fairly dividing a heterogeneous cake between a number of players with different tastes. In this setting, it is known that fairness requirements may result in a suboptimal division from the social welfare standpoint. Here, we show that in some cases, discarding  some of the cake and fairly dividing only the remainder may be socially preferable to any fair division of the entire cake. We study this phenomenon, providing asymptotically-tight bounds on the social improvement achievable by such discarding.
\end{abstract}

\section{Introduction}

{\em Cake cutting} is a standard metaphor used for modeling the problem of fair division of goods among multiple players. ``Fairness'' can be defined in several different ways, with {\em envy-freeness} being one of the more prominent ones.  A division is {\em envy-free} if no player 
prefers getting a piece given to someone else.

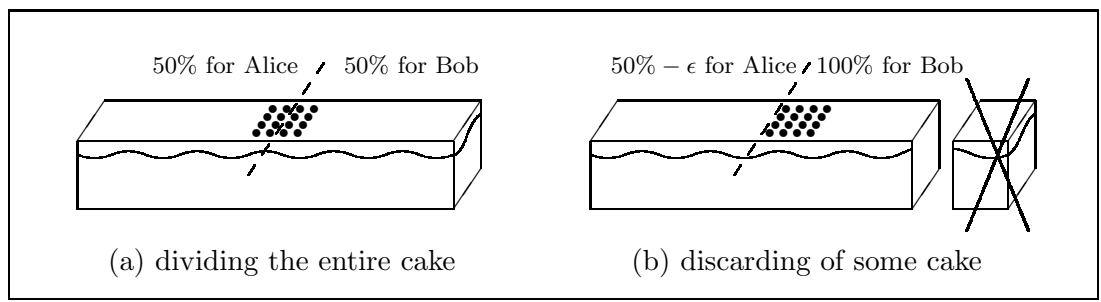
\begin{figure}[bht]
\begin{center}
\framebox[1.1\width]{

\setlength{\unitlength}{9mm}
\begin{picture}(14.2,4)

\newsavebox{\icing}
\savebox{\icing}{
    \qbezier(0,0.1)(0.25,0)(0.5,0.1) \qbezier(0.5,0.1)(0.75,0.2)(1,0.1)
}

\newsavebox{\brokenline}
\savebox{\brokenline}{
    \qbezier(0,0)(0.1,0.15)(0.1,0.15)
    \qbezier(0.2,0.3)(0.3,0.45)(0.3,0.45)
    \qbezier(0.4,0.6)(0.5,0.75)(0.5,0.75)
    \qbezier(0.6,0.9)(0.7,1.05)(0.7,1.05)
    \qbezier(0.8,1.2)(0.9,1.35)(0.9,1.35)
    \qbezier(1.0,1.5)(1.1,1.65)(1.1,1.65)
}

\newsavebox{\leftend}
\savebox{\leftend}{
    \put(0,0){\line(0,1){1}} \put(0,1){\line(2,3){0.4}}
}

\newsavebox{\rightend}
\savebox{\rightend}{
    \put(0,0){\usebox{\leftend}}
    \put(0,0){\line(2,3){0.4}} \put(0.4,0.6){\line(0,1){1}}
}

\newsavebox{\cherries}
\savebox{\cherries}{
\put(0,0){\circle{0.2}} \put(0.12,0.18){\circle{0.2}}
\qbezier(0,0.04)(-0.08,0.38)(0.14,0.61) \qbezier(0.12,0.22)(-0.04,0.44)(0.14,0.61)
}

\newsavebox{\oldcherries}
\savebox{\oldcherries}{
\put(0,0){\circle{0.2}} \put(0.3,0){\circle{0.2}}
\qbezier(0,0.04)(0.09,0.44)(0.31,0.53) \qbezier(0.3,.04)(0.09,0.43)(0.31,0.53)
}

\put(0,1.2){\usebox{\leftend}}
\put(0,1.2){\line(1,0){5.5}}
\put(0,2.2){\line(1,0){5.5}}
\put(0.4,2.8){\line(1,0){5.5}}
\put(5.5,1.2){\usebox{\rightend}}

\put(0,1.9){\usebox{\icing}} \put(1,1.9){\usebox{\icing}} \put(2,1.9){\usebox{\icing}}\put(3,1.9){\usebox{\icing}} \put(4,1.9){\usebox{\icing}}
\qbezier(5,2.0)(5.25,1.9)(5.5,2.0)
\qbezier(5.5,2.0)(5.6,2.0)(5.7,2.3) \qbezier(5.7,2.3)(5.8,2.55)(5.9,2.6)

\multiput(2.62,2.32)(0.08,0.12){4}{\circle*{0.1}}
\multiput(2.82,2.32)(0.08,0.12){4}{\circle*{0.1}}
\multiput(3.02,2.32)(0.08,0.12){4}{\circle*{0.1}}
\multiput(3.22,2.32)(0.08,0.12){4}{\circle*{0.1}}

\put(2.5,1.7){\usebox{\brokenline}}

\put(1.1,3.2){\footnotesize $50\%$ for Alice}
\put(3.9,3.2){\footnotesize $50\%$ for Bob}

\put(0.45,0.3){(a) dividing the entire cake}

\put(7.5,1.2){\usebox{\leftend}}
\put(7.5,1.2){\line(1,0){4.7}}
\put(7.5,2.2){\line(1,0){4.7}}
\put(7.9,2.8){\line(1,0){4.7}}
\put(12.2,1.2){\usebox{\rightend}}

\put(7.5,1.9){\usebox{\icing}} \put(8.5,1.9){\usebox{\icing}}
\put(9.5,1.9){\usebox{\icing}} \put(10.5,1.9){\usebox{\icing}}
\qbezier(11.5,2.0)(11.75,1.9)(12,2.0) \qbezier(12,2.0)(12.1,2.05)(12.2,2.08)

\multiput(10.12,2.32)(0.08,0.12){4}{\circle*{0.1}}
\multiput(10.32,2.32)(0.08,0.12){4}{\circle*{0.1}}
\multiput(10.52,2.32)(0.08,0.12){4}{\circle*{0.1}}
\multiput(10.72,2.32)(0.08,0.12){4}{\circle*{0.1}}

\put(9.6,1.7){\usebox{\brokenline}}

\put(12.8,1.2){\usebox{\leftend}}
\put(12.8,1.2){\line(1,0){0.8}}
\put(12.8,2.2){\line(1,0){0.8}}
\put(13.2,2.8){\line(1,0){0.8}}
\put(13.6,1.2){\usebox{\rightend}}

\qbezier(12.8,2.08)(12.85,2.1)(13.1,2.0) \qbezier(13.1,2.0)(13.35,1.9)(13.6,2.0)
\qbezier(13.6,2.0)(13.7,2.05)(13.8,2.3) \qbezier(13.8,2.3)(13.9,2.55)(14,2.6)

\linethickness{0.25mm}
\qbezier(13.0,0.885)(13.9,3.115)(13.9,3.115)
\qbezier(13.0,3.115)(13.9,0.885)(13.9,0.885)

\put(7.8,3.2){\footnotesize $50\%-\epsilon$ for Alice}
\put(10.8,3.2){\footnotesize $100\%$ for Bob}

\put(8.1,0.3){(b) discarding of some cake}

\end{picture}}
\caption{Discarding a small part of the cake allows for a better envy-free division between Alice and Bob.}\label{fig:cake-intro}
\end{center}
\end{figure}

Consider the rectangular cake depicted in Figure~\ref{fig:cake-intro}(a). It is a chocolate cake sprinkled with candies right along the middle.  Suppose you have two kids: Alice and Bob.  Alice likes the base of the cake, but is indifferent to the candies; Bob is the opposite: he cares only for the candies. It is easy to see that if each of the children must get one consecutive piece of the cake, then splitting the cake along the middle is the only possible envy-free division. Any other split would result in one child getting less than 50\% (by his or her valuation) and envying the other. But this division is rather wasteful: if Bob could only get a small additional fraction from Alice (small in her view), he would be doubly as happy. Is there any possible way to make this happen --- without introducing envy? 

Interestingly, the answer is in the affirmative.  By discarding a small piece from the right-end of the cake, we can now place the cut to the left of candies, giving the right piece to Bob and the left to Alice (see Figure~\ref{fig:cake-intro}(b)).  Alice would no longer envy Bob, as he gets the same amount of the cake as she does. The overall happiness level would substantially increase: Bob is doubly happy, and Alice is only $\epsilon$ less happy.

The above is a particular example of what we call the {\em dumping paradox} --- the  phenomenon in which one can increase the social welfare of envy-free divisions by discarding (={\em dumping}) some of the cake. The above provides an example of a {\em utilitarian $1.5-\epsilon$ dumping paradox}  (i.e.~the utilitarian welfare, defined as the sum of individual utilities, increases by a factor of $1.5-\epsilon$).  In this paper we analyze the dumping paradox: under what circumstances may it arise? what social welfare can it improve? and by how much?
Interestingly, we show that at times much can be gained by such discarding of some of the cake. 
We show:
\begin{itemize}

\item With regards to utilitarian welfare, the dumping paradox with $n$ players can be as high as $\Theta(\sqrt{n})$; i.e.\ there are cases where discarding some of the cake allows for an envy-free division that is $\Theta(\sqrt{n})$ better (from the utilitarian standpoint) than any envy-free division of the entire cake. This bound is asymptotically tight. 

\item With regards to egalitarian welfare, the dumping paradox with $n$ players can be as high as $\frac{n}{3}$. Egalitarian welfare is defined as the utility obtained by the least-happy player.  In particular, we show a case where discarding some cake allows us to improve from an allocation in which at least one player gets no more than $1/n$ to an allocation in which everybody gets at least $\approx \frac{1}{3}$ (!). 
Our construction almost matches the upper bound of $\frac{n}{2}$ following from~\cite{AD10}; for $n\leq 4$ we show that the bound of $\frac{n}{2}$ can actually be obtained. 

\item With regards to Pareto efficiency, there are instances in which discarding some cake allows for an envy-free division that Pareto dominates every envy-free division of the entire cake. We show that by discarding even one piece of the cake it may be possible to \emph{double} the utility of all but two players without harming these remaining two players. 

\end{itemize}
All of our results are for divisions that require that each player get one consecutive piece of the cake.  For divisions that allow players to get arbitrarily many pieces of the cake we show that no dumping paradox is possible.

\paragraph{Related work.} 
The problem of fair division has been studied in many different fields and settings. Modern mathematical treatment of fair division via the cake cutting abstraction started in the 1940s~\cite{Ste49}. Since then, many works presented algorithms or protocols for fair division~\cite{Str80,EP84,BT95,CLPP10}, as well as theorems establishing the existence of fair divisions (under different interpretations of fairness) in different settings~\cite{DS61,Str80}. Starting from the mid 1990s, several books appeared on the subject~\cite{BT96,RW98,Mou04}, and much attention was given to the question of finding bounds on the number of steps required for dividing a cake fairly~\cite{MIBK03,SW03,EP06,Pro09}.

A more recent work by Caragiannis et al.~\cite{CKKK09} added the issue of social welfare into the framework of cake cutting. In particular, Caragiannis et al.~aimed at showing bounds on the loss of social welfare caused by fairness requirements, by defining and analyzing the Price of Fairness (defined for different fairness criteria). The work in~\cite{CKKK09} considered fair division of divisible and indivisible goods, as well as divisible and indivisible chores; for each of these settings, it has provided bounds on the highest possible degradation in utilitarian welfare caused by three prominent fairness requirements --- proportionality, envy-freeness, and equitability. 
Following this line of work, a recent work by a subset of the authors here~\cite{AD10} analyzed the utilitarian and egalitarian Price of Fairness in the setting of cake cutting where each piece is required to be a single connected interval. (This is in contrast to the work of~\cite{CKKK09} that allowed a piece in the division to be comprised of any union of intervals.)

Finally, the concept of partial divisions of a cake (in which not all the cake is allotted to the players) has also been considered in~\cite{CLPP10} and very recently in~\cite{CLP11}. Interestingly, in each of these works, discarding of some of the cake serves a different purpose. In~\cite{CLPP10}, the authors present a proportional, envy-free and truthful cake cutting algorithm for players with valuation functions of a restricted form. In that work, disposing of some of the cake is what ensures that the players have no incentive to lie to the protocol. In~\cite{CLP11}, a restricted case of non-additive valuation functions is considered, with one of the results being an approximately-proportional, envy-free protocol for two players. In that protocol, some cake is discarded in order to guarantee envy-freeness. Here, we show that leaving some cake unallocated can also increase social welfare.

\section{Definitions and Preliminaries}

As customary, we assume a 1-dimensional cake that is represented by the interval $[0,1]$. We denote the set of players by $[n]$ (where $[n] = \{1,\ldots,n\}$), and assume that each player $i$ has a nonatomic (additive) measure $v_i$ mapping each interval of the cake to its value for player $i$, and having $v_i([0,1])=1$. Let $x$ be some division of the cake  between the players; we denote the value player $i$ assigns to player $j$'s piece in $x$ by $u_i(x,j)$. We say that a division $x$ is \emph{complete} if it leaves no cake unallocated; otherwise, we say that the division is \emph{partial}.

\begin{definition}
We say that a cake instance with $n$ players exhibits an \emph{$\alpha$-dumping paradox} (with $\alpha>1$ and with respect to some social welfare function $w(\cdot)$) if there exists a partial division $y$ such that
\begin{enumerate}
\item $y$ is envy-free; i.e.~$u_i(y,i) \geq u_i(y,j)$ for all $i,j\in[n]$, and

\item for every envy-free complete division $x$, $w(y) \geq \alpha\cdot w(x)$.
\end{enumerate}
\end{definition}

In this work, we consider two prominent social welfare functions: utilitarian and egalitarian. The utilitarian welfare of a division $x$ is the sum of the players' utilities; formally, we write $u(x) = \sum_{i\in[n]}{u_i(x,i)}$. The egalitarian welfare of a division $x$ is the utility of the worst-off player, i.e.~$eg(x) = \min_{i\in[n]}{u_i(x,i)}$.

From this point forward, we restrict the discussion to divisions in which every player gets a single connected interval of the cake. The first reason for this restriction is that giving the players such connected pieces seem more ``natural'', and is in many scenarios more desirable than giving pieces composed of unions of intervals. The second reason is captured by the following simple result:

\begin{proposition}
If players are allowed to get non-connected pieces (that are composed of unions of intervals), there can be no utilitarian or egalitarian dumping paradox. In addition, in this setting no envy-free partial division can Pareto dominate all envy-free complete divisions.
\end{proposition}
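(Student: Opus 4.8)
The plan is to establish all three statements through a single unifying construction: I will show that every envy-free partial division $y$ can be extended to an envy-free \emph{complete} division $x$ that weakly Pareto-dominates it, i.e.\ $u_i(x,i)\ge u_i(y,i)$ for every $i\in[n]$. Once this is in hand, the three conclusions are immediate. For the Pareto statement, such an $x$ is an envy-free complete division that $y$ cannot strictly Pareto-dominate, so no partial $y$ can dominate \emph{all} of them. For the welfare statements, $u_i(x,i)\ge u_i(y,i)$ for all $i$ gives $u(x)\ge u(y)$ and $eg(x)\ge eg(y)$, hence $w(x)\ge w(y)$ for $w\in\{u,eg\}$; this rules out any $\alpha>1$ with $w(y)\ge\alpha\,w(x)$, provided $w(x)>0$.

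The construction uses the portion $D\subseteq[0,1]$ that $y$ discards; let $Y_1,\dots,Y_n$ denote the pieces $y$ allocates. Since $v_1,\dots,v_n$ are nonatomic and additive, Lyapunov's theorem (convexity of the range of a vector of nonatomic measures) yields a \emph{perfect} partition of $D$ into $n$ measurable parts $D_1,\dots,D_n$ with $v_i(D_k)=v_i(D)/n$ for all $i,k\in[n]$. I then define $x$ by assigning player $i$ the bundle $Y_i\cup D_i$. This allocates all of the cake, and each bundle is a finite union of intervals, so $x$ is admissible in the non-connected setting — and this is exactly the step that would be illegal in the connected model, which is why the proposition is specific to disconnected pieces.

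It remains to verify the two properties of $x$. For envy-freeness, observe that $u_i(x,i)=v_i(Y_i)+v_i(D)/n$ and $u_i(x,j)=v_i(Y_j)+v_i(D)/n$; since $y$ is envy-free we have $v_i(Y_i)\ge v_i(Y_j)$, and the common term $v_i(D)/n$ cancels, giving $u_i(x,i)\ge u_i(x,j)$. Weak Pareto domination is then immediate from $u_i(x,i)=u_i(y,i)+v_i(D)/n\ge u_i(y,i)$. To close the welfare arguments, I record that every envy-free complete division is proportional: summing the envy-free inequalities $u_i(x,i)\ge u_i(x,j)$ over $j$ and using completeness gives $1=\sum_{j} u_i(x,j)\le n\,u_i(x,i)$, so $u_i(x,i)\ge 1/n$ and therefore $u(x)\ge 1>0$ and $eg(x)\ge 1/n>0$. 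This positivity is precisely what turns $w(x)\ge w(y)$ into a contradiction with $w(y)\ge\alpha\,w(x)$ for $\alpha>1$.

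The only genuine subtlety, and the step to watch, is the preservation of envy-freeness under merging. It succeeds here only because the perfect partition contributes the \emph{identical} increment $v_i(D)/n$ to every bundle in player $i$'s eyes (so the existing envy-free inequalities survive termwise) and because disconnected bundles are permitted. In the connected setting neither the merge nor the additivity of envy is available, which is consistent with the paradoxes exhibited in the rest of the paper.
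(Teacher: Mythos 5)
Your proposal is correct and follows essentially the same route as the paper: the paper also completes the partial division by dividing the discarded part $U$ envy-freely among all players (citing~\cite{DS61}) and merging, so that the resulting complete division is envy-free and weakly Pareto-dominates $y$, yielding the same contradiction for all three claims. Your perfect partition via Lyapunov is just a concrete instantiation of that cited envy-free division of the leftover (indeed it is the Dubins--Spanier exact division from the very same reference), and your explicit positivity argument $w(x)\ge 1/n>0$ merely makes precise a step the paper leaves implicit.
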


\begin{proof}
We prove for utilitarian welfare; the proof for egalitarian welfare and Pareto domination is analogous. Suppose that we allow such non-connected divisions, and assume that there is a utilitarian $\alpha$-dumping paradox, with $\alpha>1$. Then there exists an envy-free partial division $y$ such that for every envy-free complete division $x$, $\sum_{i\in[n]}{u_i(y,i)} \geq \alpha\cdot\sum_{i\in[n]}{u_i(x,i)}$. Let $U\subseteq [0,1]$ be the part of the cake that was not allocated to the players in $y$; it is known (e.g.~\cite{DS61}) that $U$ itself has a complete envy-free division $y'$. Note that giving each player $i$ her part from $y'$ in addition to her original piece from $y$ yields again an envy-free division; call this division $z$. Clearly, $z$ is a complete division of $[0,1]$ having $u_i(z,i) \geq u_i(y,i)$ for all $i\in[n]$. It follows that for every envy-free division $x$
\begin{equation*}
\sum_{i\in[n]}{u_i(z,i)} \geq \sum_{i\in[n]}{u_i(y,i)} \geq \alpha\cdot\sum_{i\in[n]}{u_i(x,i)} > \sum_{i\in[n]}{u_i(x,i)} \;;
\end{equation*}
a contradiction.
\end{proof}

We thus formally define a connected division of a cake to $n$ players simply as a sequence of $n$ non-intersecting open intervals\footnote{Since we assume that the valuation functions of all players are nonatomic, open and closed intervals always have the same value.}; the first interval is given to the first player, the second to the second player, etc. We will say that such a division is complete if the union of these intervals (including their endpoints) equals the entire cake; otherwise, we will say that the division is partial. Note that a partial division may leave several disjoint intervals unallocated.

Finally, we give the definition of the Price of Envy-Freeness, first defined in~\cite{CKKK09}, which aims to measure the highest degradation in social welfare that may be necessary to achieve envy-freeness.

\begin{definition}
Let $I$ be a cake instance, $X$ the set of all complete divisions of $I$, and $X_{EF}$ the set of all complete envy-free divisions of $I$.
The \emph{Price of Envy-Freeness} of the cake instance $I$, with respect to a social welfare function $w(\cdot)$, is defined as the ratio
\begin{equation*}
\frac{\max_{x\in X}{w(x)}}{\max_{y\in X_{EF}}{w(y)}} \;.
\end{equation*}
\end{definition}

We now show a connection between the dumping paradox of a cake instance and the Price of Envy-Freeness for the same instance and welfare function.

\begin{proposition}\label{pro:de-pof}
The utilitarian (resp.~egalitarian) dumping paradox is bounded from above by the utilitarian (resp.~egalitarian) Price of Envy-Freeness.
\end{proposition}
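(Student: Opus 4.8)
The plan is to observe that both ratios in question share the same denominator --- the maximum welfare $\max_{x\in X_{EF}}w(x)$ over envy-free complete divisions --- so the proposition reduces to a statement about numerators. Write $w^{*}=\sup\{w(y): y\text{ an envy-free partial division}\}$ for the best welfare achievable by dumping. By condition~2 of the definition, any $\alpha$ for which the instance exhibits an $\alpha$-dumping paradox is witnessed by some partial $y$ with $w(y)\ge\alpha\cdot\max_{x\in X_{EF}}w(x)$, whence $\alpha\le w^{*}/\max_{x\in X_{EF}}w(x)$. Since the Price of Envy-Freeness equals $\max_{x\in X}w(x)/\max_{x\in X_{EF}}w(x)$, it suffices to prove
\[
w^{*} \le \max_{x\in X}w(x),
\]
i.e.\ that no envy-free partial division can beat the best \emph{unrestricted} complete division in welfare.

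To establish this I would take an arbitrary envy-free partial division $y$ and build from it a complete division $z$ in which every player is weakly better off. Here the connectedness of the division is what makes the construction transparent: the unallocated cake is a finite union of open ``gaps'' lying between consecutive allocated intervals, together with possibly one gap at each end of $[0,1]$. I would assign each such gap to one of its (at most two) neighbouring allocated intervals and extend that interval to absorb it; an end-gap goes to the unique adjacent interval. The result is a complete connected division $z$ in which each player's interval only grew, so by monotonicity of the nonatomic measures $u_i(z,i)\ge u_i(y,i)$ for every $i\in[n]$.

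It then remains to push this pointwise domination through the welfare function. Both welfare functions we consider are monotone nondecreasing in the individual utilities: $u(z)=\sum_{i\in[n]}u_i(z,i)\ge\sum_{i\in[n]}u_i(y,i)=u(y)$, and $eg(z)=\min_{i\in[n]}u_i(z,i)\ge\min_{i\in[n]}u_i(y,i)=eg(y)$. Hence $w(z)\ge w(y)$ in either case. Since $z\in X$, we obtain $w(y)\le w(z)\le\max_{x\in X}w(x)$; taking the supremum over envy-free partial $y$ yields $w^{*}\le\max_{x\in X}w(x)$, and dividing by the common denominator $\max_{x\in X_{EF}}w(x)$ completes the argument.

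I expect the only subtle point to be the extension step --- verifying that a partial \emph{connected} division can always be completed to a connected one without shrinking any piece, and that the gaps (including those at the endpoints) are handled consistently. Note that, in contrast to the earlier proposition for non-connected pieces, we do \emph{not} need $z$ to remain envy-free: the numerator of the Price of Envy-Freeness ranges over \emph{all} complete divisions, so it is enough that $z$ be complete and weakly dominate $y$ player-by-player. This is what keeps the construction a simple ``grow the intervals'' operation rather than requiring a fresh envy-free division of the leftover.
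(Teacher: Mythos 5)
Your proposal is correct and follows essentially the same route as the paper: the paper's proof also completes the envy-free partial division $y$ by absorbing each inclusion-maximal unallocated interval into an adjacent player's piece, notes the resulting complete division $z$ need not be envy-free (since the Price of Envy-Freeness numerator ranges over all complete divisions) and satisfies $u_i(z,i)\geq u_i(y,i)$ for all $i$, and concludes via monotonicity of the welfare function. The only difference is cosmetic --- the paper phrases the argument as a proof by contradiction, whereas you argue directly via the common denominator $\max_{x\in X_{EF}}w(x)$.
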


\begin{proof}
We again prove only for utilitarian welfare. Assume, by contradiction, that there exists a cake cutting instance with $n$ players and with utilitarian dumping paradox of $\beta$, while the utilitarian Price of Envy-Freeness for these $n$ players is $\alpha < \beta$. Then there exists a partial division $y$ such that for every envy-free complete division $x$, $\sum_{i\in[n]}{u_i(y,i)} \geq \beta\cdot\sum_{i\in[n]}{u_i(x,i)}$. Note that every inclusion-maximal unalloted interval is adjacent to at least one interval that is given to a player. Therefore, consider the complete division $z$ which allocates each player her interval as in $y$, and in addition adds the previously-unalloted intervals to the piece of one of the adjacent players (chosen arbitrarily). This is clearly a (not necessarily envy-free) complete division in which $u_i(z,i) \geq u_i(y,i)$ for every $i\in[n]$. We get that for every envy-free division $x$
\begin{equation*}
\sum_{i\in[n]}{u_i(z,i)} \geq \sum_{i\in[n]}{u_i(y,i)} \geq \beta\cdot\sum_{i\in[n]}{u_i(x,i)} > \alpha\cdot\sum_{i\in[n]}{u_i(x,i)} \;,
\end{equation*}
contradicting our bound on the utilitarian Price of Envy-Freeness.
\end{proof}

\section{Utilitarian Welfare}

\begin{theorem}\label{thm:duut}
The utilitarian dumping paradox with $n$ players may be as high as $\Theta(\sqrt{n})$, and this bound is asymptotically tight.
\end{theorem}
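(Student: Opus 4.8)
The statement bundles a lower and an upper bound, and I would prove them separately.

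\medskip

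\noindent\textbf{Upper bound (tightness).} Here the plan is to do essentially no work and instead invoke Proposition~\ref{pro:de-pof}: the utilitarian dumping paradox of any instance is at most its utilitarian Price of Envy-Freeness. It therefore suffices to know that, in the connected setting, the utilitarian Price of Envy-Freeness is $O(\sqrt n)$. This is exactly the quantity analyzed in~\cite{AD10}, so I would cite (or, if needed, reproduce) the $O(\sqrt n)$ upper bound there and compose it with Proposition~\ref{pro:de-pof}. All the real content is in the matching lower bound.

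\medskip

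\noindent\textbf{Lower bound.} I would exhibit, for every $n$, a single instance carrying an envy-free partial division $y$ with $u(y)=\Theta(\sqrt n)$ together with the guarantee that every envy-free \emph{complete} division $x$ has $u(x)=O(1)$. The guiding principle is the one already visible in the two-player example of Figure~\ref{fig:cake-intro}, and sharpened by the non-connected Proposition above: a dumping gain can arise only from the interaction of \emph{completeness} and \emph{connectivity}, since discarding thin ``buffer'' intervals frees the cuts to sit at favorable locations, whereas using all the cake forces at least one cut per boundary into a wasteful position. Concretely, I would assemble the cake from $k=\sqrt n$ consecutive \emph{clusters}, each holding $k$ players whose valuations are (up to a negligible amount) supported inside their own cluster, with consecutive clusters separated by a thin sliver worth almost nothing to everyone. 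Each cluster is a small gadget on which envy-freeness already caps the attainable welfare at $\Theta(1)$. Discarding all the slivers decouples the clusters, so each can independently reach its $\Theta(1)$ ceiling while remaining internally envy-free; since a player values foreign clusters at $\approx 0$, no cross-cluster envy is introduced, and summing over the $k$ clusters gives an envy-free $y$ with $u(y)=\Theta(\sqrt n)$.

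\medskip

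\noindent\textbf{The crux.} The genuine difficulty is the bound $u(x)=O(1)$ \emph{over all} envy-free complete divisions, not merely for one convenient division. The point is to engineer the gadget and its placement so that, once no sliver may be discarded, completeness and connectivity \emph{propagate} the envy-free inequalities $u_i(x,i)\ge u_i(x,j)$ across cluster boundaries: each undiscardable sliver must be appended to some adjacent player's interval, every boundary forces a cut inside a cluster, and (for instance via a ``greedy'' player in each cluster whose envy-freeness compels it to hold a long interval) each such misplaced cut collapses the value attainable in its cluster. I expect the main obstacle to be proving that this degradation is \emph{unavoidable for every} envy-free complete division rather than merely exhibiting one bad division — essentially a global case analysis of where the $n-1$ cuts of a complete connected division may lie subject to all the envy-free constraints simultaneously. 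Establishing that the clusters cannot all be ``switched on'' at once in a complete division, so that the global total is pulled down to $O(1)$, and then pairing this with $u(y)=\Theta(\sqrt n)$, yields a $\Theta(\sqrt n)$ dumping paradox and closes the gap against the upper bound.
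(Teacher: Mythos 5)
Your upper bound is exactly the paper's: Proposition~\ref{pro:de-pof} composed with the $\frac{\sqrt n}{2}+1-o(1)$ bound on the utilitarian Price of Envy-Freeness from~\cite{AD10}; nothing to object to there. The genuine gap is in the lower-bound architecture, not in a deferred detail. If, as you propose, each of the $k=\sqrt n$ clusters holds $k$ players whose valuations are (essentially) supported inside their own cluster, then each cluster admits a \emph{complete connected envy-free} division among its own $k$ players (by the classical existence theorems for connected envy-free divisions), and for complete divisions envy-freeness implies proportionality, so each of the $k$ players gets value $\geq \approx \frac1k$ and each cluster contributes welfare $\geq \approx 1$. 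Gluing these $k$ within-cluster divisions together --- absorbing your negligible-value slivers into adjacent pieces, which perturbs no envy comparison since everyone values them at $\approx 0$ --- yields a \emph{globally} envy-free complete division: a player values every piece in a foreign cluster at $\approx 0$, and inside her own cluster local envy-freeness applies. This complete envy-free division already has utilitarian welfare $\Theta(k)=\Theta(\sqrt n)$, the same order as your partial division $y$, so the cap $u(x)=O(1)$ you need over \emph{all} complete envy-free divisions is false for any instance of this decoupled form, and your ratio collapses to $\Theta(1)$. The slivers cannot rescue the plan: precisely because they are worth $\approx 0$ to everyone, completeness never forces a cut into a ``wasteful position'' --- an undiscardable sliver is harmlessly appended to a neighbor. (Note also that even if a cluster \emph{could} be throttled in isolation, a $k$-player gadget can exhibit a gap of at most $O(\sqrt k)=O(n^{1/4})$ by the very upper bound you invoke.)

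The paper's construction evades this by being globally coupled rather than clustered. Half of the $n$ players (``common players'') value only one shared ``common part,'' which simultaneously carries value $\approx 1$ for all Type A and Type B players; a one-line counting argument (one cannot allocate more than $100\%$ of the common part) then caps every non-common player's take from it below $\frac2n$, killing all outside options at once. The gain comes not from $k$ clusters at $\Theta(1)$ each, but from $k$ ``chosen'' players, each of whose $t$ desired intervals (summing to her entire value) are interleaved with intervals desired by Type A players; a chain of envy-free constraints (chosen player takes value $>\frac1t$ $\Rightarrow$ she devours two Type A intervals and those players must be compensated in a separate ``compensation'' part $\Rightarrow$ the Type B player sitting between the compensating intervals is forced into envy over a leftover interval nobody else wants) pins every chosen player at $\frac1t$ in \emph{every} complete envy-free division, while discarding $(t-1)k$ tiny Type B intervals breaks the chain and lets each chosen player jump to value $1$. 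With $t=\Theta(k)$ and $n=\Theta(kt)$ this yields the $\Theta(\sqrt n)$ gap. So your instinct about the crux (bounding all complete envy-free divisions) is right, but the mechanism must be a commonly-valued resource that globally suppresses alternatives plus interleaved preferences whose envy chain only a discard can break --- not independent gadgets, which the existence theorems for envy-free divisions will always let an adversary satisfy cluster by cluster.
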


We will show that for every $k,t\in\mathbb{N}$, there exists a cake cutting instance with $n = 2k(3t-2)$ players in which throwing away $(t-1)k$ intervals of the cake can improve the utilitarian welfare of the best envy-free division by a factor of $\frac{ktn + 2tn}{nk + 12t(t-1)k + tn} > \frac{kt+2t}{k+3t}$ (note that $12(t-1)k < 2n$). In particular, choosing $t = \Theta(k)$ yields an improvement of $\Theta(\sqrt{n})$. The matching upper bound follows from Proposition~\ref{pro:de-pof}, combined with with Theorem 1 of~\cite{AD10}, which shows an upper bound of $\frac{\sqrt{n}}{2} + 1 - o(1)$ on the Price of Envy-Freeness.

To prove the lower bound we construct a cake with three parts: the ``common'' part, the ``high-values'' part, and the ``compensation'' part. Furthermore, each of the latter two parts is itself divided into $k$ identical subparts. Thus, in order to clarify the presentation, we first illustrate the key structure and reason about its properties. We then explain how this structure is used to create the full construction.

Let us start with a subset of $3t-2$ players, comprised of $2t-2$ ``Type A'' players, $t-1$ ``Type B'' players, and one ``chosen'' player $C$. For $0\leq i\leq t-2$ we will have the players $3i + 1$ and $3i + 2$ be of Type A, and the player $3i+3$ be of Type B; we will say that players $3i+1$ and $3i+2$ are ``neighbors''. In the ``high values'' part of the cake, the chosen player $C$ has $t$ intervals she desires, each of them being of value $\frac{1}{t}$ to her. Between every two consecutive intervals desired by $C$ there are two more intervals, desired by a pair of neighbors of Type A. Each neighbor desires one of these intervals, and considers it to be worth  $\frac{4}{n}$ (as can be seen on the left-hand side of Figure~\ref{fig:prefut}). In addition, each pair of neighbors desires two more intervals, located in the ``compensation'' part of the cake. Specifically, for every two neighbors $3i+1$ and $3i+2$, we have two intervals desired by $3i+1$ followed by two intervals desired by $3i+2$; each of these intervals is worth $\frac{2}{n}$ to the corresponding player. In between these four intervals, there are three intervals desired by the player $3i+3$ of Type B: the first and third ones have each a value of $\frac{3}{2n}$ to that player, and the second has value of $\frac{1}{n}$. The reader is again referred to Figure~\ref{fig:prefut} for a graphical representation; note that while the preferences of the chosen player $C$ are completely described (as the union of all her desired intervals here sums up to $1$), this is not so for the other players.

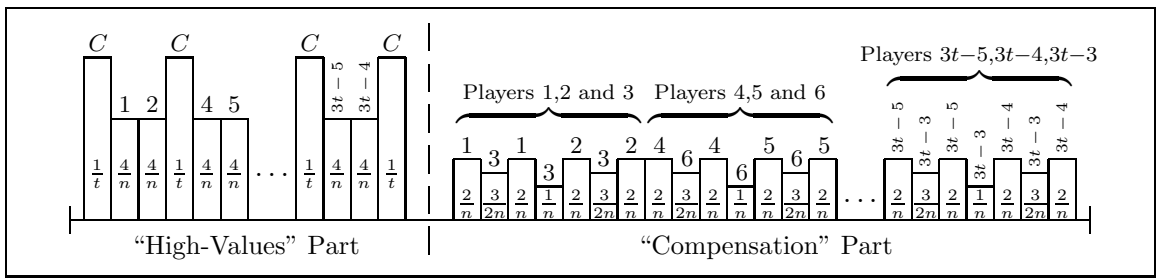
\begin{figure}[bth]
\begin{center}
\framebox[1.1\width]{

\setlength{\unitlength}{9mm}
\begin{picture}(15,3.7)

\put(-0.1,0.7){\line(1,0){14.9}}
\put(-0.1,0.5){\line(0,1){0.4}}
\put(14.8,0.5){\line(0,1){0.4}}

\newsavebox{\rectT}
\savebox{\rectT}(2.4,0.4)[bl]{
	\put(0,0){\line(0,1){2.4}} \put(0,2.4){\line(1,0){0.4}}
	\put(0.4,0){\line(0,1){2.4}}
}
\newsavebox{\rectK}
  \savebox{\rectK}(0.4,0.5)[bl]{
	\put(0,0){\line(0,1){0.5}} \put(0,0.5){\line(1,0){0.4}}
	\put(0.4,0){\line(0,1){0.5}}
}
	
\global\newsavebox{\rectMk}
  \savebox{\rectMk}(0.4,0.7)[bl]{
	\put(0,0){\line(0,1){0.7}} \put(0,0.7){\line(1,0){0.4}}
	\put(0.4,0){\line(0,1){0.7}}
}

\global\newsavebox{\rectDk}
  \savebox{\rectDk}(0.4,0.9)[bl]{
	\put(0,0){\line(0,1){0.9}} \put(0,0.9){\line(1,0){0.4}}
	\put(0.4,0){\line(0,1){0.9}}
}

\global\newsavebox{\rectQk}
  \savebox{\rectQk}(0.4,1.5)[bl]{
	\put(0,0){\line(0,1){1.5}} \put(0,1.5){\line(1,0){0.4}}
	\put(0.4,0){\line(0,1){1.5}}
}

\put(0.1,0.7){\usebox{\rectT}}     \put(0.18,1.3){\footnotesize $\frac{1}{t}$}   \put(0.17,3.2){\footnotesize $C$} 
\put(0.5,0.7){\usebox{\rectQk}}  \put(0.56,1.3){\footnotesize $\frac{4}{n}$}   \put(0.61,2.3){\footnotesize $1$} 
\put(0.9,0.7){\usebox{\rectQk}}  \put(0.96,1.3){\footnotesize $\frac{4}{n}$}   \put(1.01,2.3){\footnotesize $2$} 
\put(1.3,0.7){\usebox{\rectT}}     \put(1.38,1.3){\footnotesize $\frac{1}{t}$}   \put(1.37,3.2){\footnotesize $C$} 
\put(1.7,0.7){\usebox{\rectQk}}  \put(1.76,1.3){\footnotesize $\frac{4}{n}$}   \put(1.81,2.3){\footnotesize $4$} 
\put(2.1,0.7){\usebox{\rectQk}}  \put(2.16,1.3){\footnotesize $\frac{4}{n}$}   \put(2.21,2.3){\footnotesize $5$}

\put(2.6, 1.36){\dots}

\put(3.2,0.7){\usebox{\rectT}}     \put(3.28,1.3){\footnotesize $\frac{1}{t}$}   \put(3.27,3.2){\footnotesize $C$} 
\put(3.6,0.7){\usebox{\rectQk}}  \put(3.66,1.3){\footnotesize $\frac{4}{n}$}   
\put(4.0,0.7){\usebox{\rectQk}}  \put(4.06,1.3){\footnotesize $\frac{4}{n}$}   
\put(4.4,0.7){\usebox{\rectT}}     \put(4.48,1.3){\footnotesize $\frac{1}{t}$}   \put(4.47,3.2){\footnotesize $C$} 

\multiput(5.15, 0.2)(0,0.6){6}{\line(0,1){0.4}}

\put(0.8, 0.2){\small ``High-Values'' Part}
\put(8.2, 0.2){\small ``Compensation'' Part}

\put(5.55,2.1){$\overbrace{\qquad\qquad\quad\quad\;}^\text{Players 1,2 and 3}$}

\put(5.5,0.7){\usebox{\rectDk}}     \put(5.56,0.86){\footnotesize $\frac{2}{n}$}   \put(5.61,1.7){\footnotesize $1$}
\put(5.9,0.7){\usebox{\rectMk}}     \put(5.9,0.86){\footnotesize $\frac{3}{2n}$}   \put(6.01,1.5){\footnotesize $3$}
\put(6.3,0.7){\usebox{\rectDk}}     \put(6.36,0.86){\footnotesize $\frac{2}{n}$}   \put(6.41,1.7){\footnotesize $1$}
\put(6.7,0.7){\usebox{\rectK}}     \put(6.76,0.86){\footnotesize $\frac{1}{n}$}   \put(6.81,1.3){\footnotesize $3$}
\put(7.1,0.7){\usebox{\rectDk}}     \put(7.16,0.86){\footnotesize $\frac{2}{n}$}   \put(7.21,1.7){\footnotesize $2$}
\put(7.5,0.7){\usebox{\rectMk}}     \put(7.5,0.86){\footnotesize $\frac{3}{2n}$}   \put(7.61,1.5){\footnotesize $3$}
\put(7.9,0.7){\usebox{\rectDk}}     \put(7.96,0.86){\footnotesize $\frac{2}{n}$}   \put(8.01,1.7){\footnotesize $2$}

\put(8.35,2.1){$\overbrace{\qquad\qquad\quad\quad\;}^\text{Players 4,5 and 6}$}

\put(8.3,0.7){\usebox{\rectDk}}     \put(8.36,0.86){\footnotesize $\frac{2}{n}$}   \put(8.43,1.7){\footnotesize $4$}
\put(8.7,0.7){\usebox{\rectMk}}     \put(8.7,0.86){\footnotesize $\frac{3}{2n}$}   \put(8.83,1.5){\footnotesize $6$}
\put(9.1,0.7){\usebox{\rectDk}}     \put(9.16,0.86){\footnotesize $\frac{2}{n}$}   \put(9.23,1.7){\footnotesize $4$}
\put(9.5,0.7){\usebox{\rectK}}     \put(9.56,0.86){\footnotesize $\frac{1}{n}$}   \put(9.63,1.3){\footnotesize $6$}
\put(9.9,0.7){\usebox{\rectDk}}     \put(9.96,0.86){\footnotesize $\frac{2}{n}$}   \put(10.03,1.7){\footnotesize $5$}
\put(10.3,0.7){\usebox{\rectMk}}     \put(10.3,0.86){\footnotesize $\frac{3}{2n}$}   \put(10.43,1.5){\footnotesize $6$}
\put(10.7,0.7){\usebox{\rectDk}}     \put(10.76,0.86){\footnotesize $\frac{2}{n}$}   \put(10.83,1.7){\footnotesize $5$}

\put(11.2,0.97){\dots}

\put(11.45,2.65){$\overbrace{\qquad\qquad\quad\quad\;}^{\text{Players }3t-5,3t-4,3t-3}$}

\put(11.8,0.7){\usebox{\rectDk}}     \put(11.86,0.86){\footnotesize $\frac{2}{n}$}   
\put(12.2,0.7){\usebox{\rectMk}}     \put(12.2,0.86){\footnotesize $\frac{3}{2n}$}   
\put(12.6,0.7){\usebox{\rectDk}}     \put(12.66,0.86){\footnotesize $\frac{2}{n}$}   
\put(13.0,0.7){\usebox{\rectK}}     \put(13.06,0.86){\footnotesize $\frac{1}{n}$}   
\put(13.4,0.7){\usebox{\rectDk}}     \put(13.46,0.86){\footnotesize $\frac{2}{n}$}   
\put(13.8,0.7){\usebox{\rectMk}}     \put(13.8,0.86){\footnotesize $\frac{3}{2n}$}   
\put(14.2,0.7){\usebox{\rectDk}}     \put(14.26,0.86){\footnotesize $\frac{2}{n}$}   

\put(3.73,2.29){\begin{sideways}\tiny $3t-5$ \end{sideways}}
\put(4.13,2.29){\begin{sideways}\tiny $3t-4$ \end{sideways}}

\put(11.92,1.69){\begin{sideways}\tiny $3t-5$ \end{sideways}}
\put(12.32,1.49){\begin{sideways}\tiny $3t-3$ \end{sideways}}
\put(12.72,1.69){\begin{sideways}\tiny $3t-5$ \end{sideways}}
\put(13.12,1.29){\begin{sideways}\tiny $3t-3$ \end{sideways}}
\put(13.52,1.69){\begin{sideways}\tiny $3t-4$ \end{sideways}}
\put(13.92,1.49){\begin{sideways}\tiny $3t-3$ \end{sideways}}
\put(14.32,1.69){\begin{sideways}\tiny $3t-4$ \end{sideways}}

\end{picture}}
\end{center}\caption{The (incomplete) preferences of one set of players. The number above each column denotes which player has that valuation.}
\label{fig:prefut}
\end{figure}

We can now reason about the possible envy-free divisions for these (incompletely-described) preferences.

\begin{lemma}\label{lem:u-ef-chosen}
Assume a set of players with preferences as above. Suppose in addition that in any envy-free division the following properties hold:
\begin{enumerate}
\renewcommand{\theenumi}{(P\arabic{enumi})}
\renewcommand{\labelenumi}{\theenumi}
\setlength\itemindent{10px}

\item any piece of value $\frac{4}{n}$ or more that can only be given to some Type A player either intersects her (single) desired interval from the ``high-values'' part, or contains both of her desired intervals from the ``compensation'' part, and \label{asmp1}

\item it is impossible for any Type B player to get a piece of value $\geq \frac{2}{n}$ which is not completely contained in the part of the cake described above. \label{asmp2}
\end{enumerate}
Then no envy free division gives the chosen player $C$ a piece of value  $> \frac{1}{t}$.
\end{lemma}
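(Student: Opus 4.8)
The plan is to argue by contradiction: suppose some envy-free division gives the chosen player $C$ a piece worth strictly more than $\frac1t$, and produce a Type B player who is forced to envy one of her neighbors.

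First I would read off the geometry forced on $C$'s piece. Since $C$ values only her $t$ high-values intervals, each of value exactly $\frac1t$, and her piece is a single interval, a value above $\frac1t$ can be attained only if $C$'s piece meets the interiors of two \emph{consecutive} high-values intervals of $C$. Being connected, it then contains everything lying between them, and what lies between two consecutive $C$-intervals is exactly the two high-values intervals of one neighboring pair of Type A players, say $a=3i+1$ and $b=3i+2$. Hence $C$'s piece contains, in full, both $a$'s and $b$'s high-values interval (each worth $\frac4n$ to its owner).

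Next I would push $a$ and $b$ into the compensation part using envy-freeness together with property~\ref{asmp1}. Since $a$ does not envy $C$, her own piece is worth at least $v_a(\text{$C$'s piece})\ge\frac4n$ to her; but her high-values interval sits inside $C$'s piece and so cannot belong to $a$'s piece, which therefore does not intersect it. Property~\ref{asmp1} then forces $a$'s piece to contain both of her compensation intervals, and the same applies to $b$. In the block $\frac2n,\frac{3}{2n},\frac2n,\frac1n,\frac2n,\frac{3}{2n},\frac2n$ belonging to the triple $(a,b,c)$ with $c=3i+3$, player $a$'s two compensation intervals are the first and third, which flank the first $\frac{3}{2n}$-interval of $c$; thus $a$'s piece also contains that $\frac{3}{2n}$-interval, and symmetrically $b$'s piece contains the last $\frac{3}{2n}$-interval of $c$. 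Consequently $v_c(\text{$a$'s piece})\ge\frac{3}{2n}$ and $v_c(\text{$b$'s piece})\ge\frac{3}{2n}$, while $c$'s only remaining source of value inside this region is the central $\frac1n$-interval.

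Finally I would derive the envy of $c$, and this is where the delicate bookkeeping lives. If $c$'s own piece stays inside the described region it is worth at most $\frac1n<\frac{3}{2n}\le v_c(\text{$a$'s piece})$, so $c$ envies $a$. The only conceivable escape is for $c$ to take a piece reaching outside the region; by property~\ref{asmp2} such a piece is worth less than $\frac2n$. To rule this out I would observe that a complete division must allocate the central $\frac1n$-interval, and examine the possibilities: if $c$ herself holds it she is inside the region and the previous case applies; if a player other than $a,b,c$ is parked there she gets value $0$ here and envies; and if $a$ or $b$ absorbs it, that neighbor's piece becomes worth $\frac{3}{2n}+\frac1n=\frac{5}{2n}\ge\frac2n$ to $c$, which exceeds $c$'s escaping piece (worth $<\frac2n$). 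In every case some player envies, contradicting envy-freeness. The hard part will be exactly this last step — checking that $c$ cannot buy her way out with an outside interval — and it is precisely here that the $\frac2n$ threshold of property~\ref{asmp2} is used: it is tuned so that $c$'s outside value can never match the $\frac{5}{2n}$ that $a$ or $b$ attains once the central interval is swallowed as well, while everything preceding this step is essentially forced.
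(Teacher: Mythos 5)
Your proposal follows the paper's own proof almost step for step: the forced geometry of $C$'s piece (meeting two consecutive desired intervals and hence devouring the high-values intervals of one neighboring pair), the use of envy-freeness together with property~\ref{asmp1} to force each neighbor's piece to contain both of her compensation intervals and therefore the flanked $\frac{3}{2n}$-interval of the Type B player $c$, and the final appeal to property~\ref{asmp2} are exactly the paper's argument.

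The one genuine lapse is in your closing case analysis, which tacitly assumes the central $\frac{1}{n}$-interval $B_2$ is held in its entirety by a single player. Since pieces are intervals, $a$'s piece may end and $b$'s piece may begin strictly inside $B_2$, so the generic case --- and the one the paper's computation actually centers on --- is that $B_2$ is \emph{split} between $a$ and $b$. Then neither neighbor reaches your $\frac{5}{2n}$; instead, at least one of them receives at least half of $B_2$, so that neighbor's piece is worth at least $\frac{3}{2n}+\frac{1}{2n}=\frac{2}{n}$ to $c$, which is exactly the threshold of~\ref{asmp2}; since~\ref{asmp2} bounds $c$'s escape piece strictly below $\frac{2}{n}$, envy still follows (splits involving $c$ or a third player reduce to your first two cases). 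So your conclusion survives, but your final remark that the $\frac{2}{n}$ threshold is ``tuned'' against $\frac{5}{2n}$ misidentifies the binding constraint: it is tuned precisely against the half-of-$B_2$ case, $\frac{3}{2n}+\frac{1}{2n}$, which is why an exhaustive case analysis matters here --- as written, your enumeration omits the tight case.
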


\begin{proof}
Suppose that $C$ does get such a piece. It must be that this piece intersects at least two of $C$'s desired intervals; in other words, there are two neighbors of Type A such that $C$ completely devours a desired interval of each of them. Let these players be $3i+1$ and $3i+2$; these players consider $C$'s piece as worth at least $\frac{4}{n}$ and thus must each get a piece of at least this value to avoid envy. By the property~\ref{asmp1}, the only way to do that is to give each of them their two desired intervals from the compensation part of the cake. Recall that each of the two players has two desired intervals in the compensation part: denote them (from left to right) $A_1,A_2,A_3$ and $A_4$. In between those pieces, there are three intervals which we will denote by $B_1,B_2$ and $B_3$, desired by the Type B player $3i+3$. In order to give each of these Type A players a piece of value $\frac{4}{n}$, we must give player $3i+1$ a contiguous piece containing $A_1,B_1$ and $A_2$, and player $3i+2$ a piece containing $A_3,B_3$ and $A_4$. Each of these pieces is worth at least $\frac{3}{2n}$ to player $3i+3$ who thus cannot be satisfied with the piece $B_2$ (worth to her only $\frac{1}{n}$), and must therefore get her share from another part of the cake. Since no other players have any value for the interval $B_2$, it must be shared between players $3i+1$ and $3i+2$ whose pieces are the closest to it. At least one of these players will get at least half of $B_2$, and the piece of this player will be worth at least $\frac{2}{n}$ to player $3i+3$; by the property~\ref{asmp2}, this will cause envy.
\end{proof}

We can now fully describe our construction. We will have $k$ sets of players, each of them identical to the set described above. This sums up to $k$ chosen players, $k(2t-2)$ Type A players, and $k(t-1)$ Type B players, totaling in $k(3t-2) = \frac{n}{2}$ players; the other half of the players will be called ``the common players'', and will all have the same preferences. The leftmost part of the cake will be the ``common'' part; this part is worth $1$ to all of the common players, $1-\frac{8}{n}$ to the Type A players, and $1-\frac{4}{n}$ to the Type B players. In the middle, we will have the ``high-values'' part, which will be composed of $k$ copies of the high-values part presented above, one for each set of players. Finally, the rightmost part of the cake will be the ``compensation'' part, which will again be composed of $k$ identical copies of the compensation part presented above, in the same order of sets of players as the high-values part. The reader is referred to Figure~\ref{fig:preut2}, which illustrates the structure of the full construction.

\begin{figure}[bth]
\begin{center}
\framebox[1.1\width]{

\setlength{\unitlength}{5.5mm}
\begin{picture}(24.5,5)

\put(-0.2,1.2){\line(1,0){24.4}}
\put(-0.2,1){\line(0,1){0.4}}
\put(24.2,1){\line(0,1){0.4}}

\newsavebox{\HV}
 \savebox{\HV}(3.6,1.8)[bl]{
	\put(0,0){\line(0,1){1.8}} \put(0.3,0){\line(0,1){1.8}} 
	\put(0.9,0){\line(0,1){1.8}} \put(1.2,0){\line(0,1){1.8}} 
	\put(2.4,0){\line(0,1){1.8}} \put(2.7,0){\line(0,1){1.8}} 
	\put(3.3,0){\line(0,1){1.8}} \put(3.6,0){\line(0,1){1.8}} 
	
	\put(0.6,0){\line(0,1){1.125}}
	\put(1.5,0){\line(0,1){1.125}} \put(1.8,0){\line(0,1){1.125}}
	\put(3.0,0){\line(0,1){1.125}}
		
	\put(0,1.8){\line(1,0){0.3}}	\put(0.9,1.8){\line(1,0){0.3}}
	\put(2.4,1.8){\line(1,0){0.3}} \put(3.3,1.8){\line(1,0){0.3}} 
	\put(0.3,1.125){\line(1,0){0.6}} \put(1.2,1.125){\line(1,0){0.6}}
	\put(2.7,1.125){\line(1,0){0.6}} 
	
	\put(1.82,0.5){...}
}

\newsavebox{\CompB}
 \savebox{\CompB}(2.1,0.675)[bl]{
 	\multiput(0,0)(0.3,0){8}{\line(0,1){0.675}} 
 	\multiput(0,0.675)(0.6,0){4}{\line(1,0){0.3}} 
 	\multiput(0.3,0.525)(1.2,0){2}{\line(1,0){0.3}} 
 	\put(0.9,0.375){\line(1,0){0.3}} 	
}

\newsavebox{\Comp}
 \savebox{\Comp}(4.8,0.675)[bl]{
 	\put(0,0){\usebox{\CompB}}
 	\put(2.12,0.2){...}
 	\put(2.7,0){\usebox{\CompB}}
}

\put(0.05,0.3){\footnotesize ``Common'' Part}

\put(0,1.2){\line(0,1){0.65}} \put(4.4,1.2){\line(0,1){0.65}}
\put(0,1.85){\line(1,0){4.4}}
\multiput(0.3,1.2)(0.65,0){6}{\line(1,1){0.65}}

\multiput(4.6,0.85)(0,0.2){17}{\line(0,1){0.1}}
\put(6.3,0.3){\footnotesize ``High-Values'' Part}

\put(4.8,1.2){\usebox{\HV}} \put(4.81,3.4){$\overbrace{\qquad\qquad\ \ \;\,}^\text{First Set}$}
\put(8.52,1.9){\dots}
\put(9.41,3.4){$\overbrace{\qquad\qquad\ \ \;\,}^{k\text{-th Set}}$}
\put(9.4,1.2){\usebox{\HV}}

\multiput(13.2,0.85)(0,0.2){17}{\line(0,1){0.1}}
\put(15.8,0.3){\footnotesize ``Compensation'' Part}

\put(13.43,2.2){$\overbrace{\qquad\qquad\qquad\ \;\,}^\text{First Set}$}
\put(13.4,1.2){\usebox{\Comp}}
\put(18.32,1.5){\dots}
\put(19.23,2.2){$\overbrace{\qquad\qquad\qquad\ \;\,}^{k\text{-th Set}}$}
\put(19.2,1.2){\usebox{\Comp}}

\end{picture}} \caption{Preferences of all players when $n=2(3t-2)k$.  }\label{fig:preut2}
\end{center}
\end{figure}
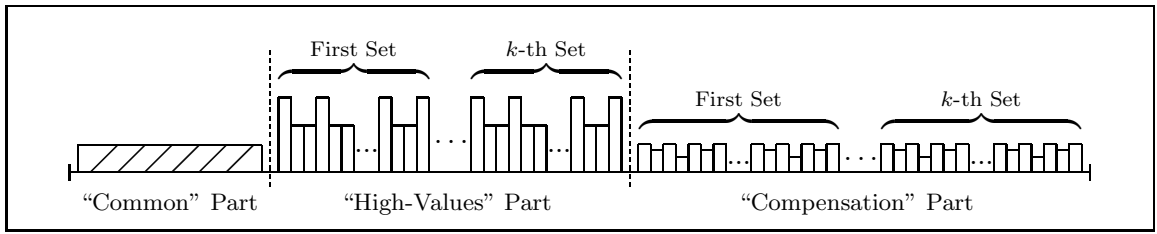

\begin{lemma}\label{lem:u-condition}
The properties~\ref{asmp1} and~\ref{asmp2} of Lemma~\ref{lem:u-ef-chosen} hold in our construction for all the Type A and Type B players in all of the sets.
\end{lemma}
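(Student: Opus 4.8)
The plan is to verify \ref{asmp1} and \ref{asmp2} of Lemma~\ref{lem:u-ef-chosen} by writing down, for each Type~A and Type~B player, the full support of her valuation and then counting the value that the relevant intervals can contribute to a single connected piece. I would begin by recording these supports explicitly. A Type~A player assigns value $\tfrac 4n$ to her one high-values interval, value $\tfrac 2n$ to each of her two compensation intervals, value $1-\tfrac 8n$ to the common part, and value $0$ to everything else (in particular to all of $C$'s intervals and to every interval belonging to the other players or to the other sets). A Type~B player assigns values $\tfrac{3}{2n},\tfrac 1n,\tfrac{3}{2n}$ to the three intervals of her block, value $1-\tfrac 4n$ to the common part, and value $0$ elsewhere. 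Thus, outside the common part, each such player has positive value only on the two or three explicitly described intervals, and both properties become statements about how much value a connected interval can gather from these.

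The step I expect to be the main obstacle is to confine the pieces named in \ref{asmp1} and \ref{asmp2} to the high-values and compensation parts, i.e.\ to argue that a piece that can only be given to a Type~A or Type~B player carries no value from the common part. I would base this on the common players' valuations rather than on any counting/saturation argument over the whole division: the common players' measures are supported exactly on the common part and vanish off it, so every sub-interval of the common part is positively valued by them and can be handed to a common player. Hence a piece meeting the common part is not one that can only go to a Type~A or Type~B player, and the pieces relevant to the two properties lie entirely within the high-values and compensation parts, where the supports above apply. Making this confinement precise in the connected setting is the one delicate point; the rest is arithmetic.

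Given the confinement, \ref{asmp1} follows at once. Let $P$ be a piece of value at least $\tfrac 4n$ that can only be given to a Type~A player. Since $P$ avoids the common part, its value to her comes solely from her three designated intervals. If $P$ meets her high-values interval, the first alternative holds. Otherwise all of its value is drawn from her two compensation intervals, each worth $\tfrac 2n$; as a connected piece can collect at most $\tfrac 2n$ from each, attaining value $\tfrac 4n$ is possible only if $P$ contains both of them in full, which is the second alternative.

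Property \ref{asmp2} is handled the same way. Since a piece that can only be given to a Type~B player also avoids the common part, its value to her is supported entirely on the three intervals of her block; consequently any such piece drawing value from outside the block gains nothing there, and a fortiori a piece lying wholly outside the block is worth $0$ to her and cannot reach $\tfrac 2n$. Equivalently, every piece worth at least $\tfrac 2n$ to her must draw that value from her block, so no such piece can fail to be contained in it --- which is exactly the assertion of \ref{asmp2}. Both computations are uniform in the set and in the particular Type~A or Type~B player, so the properties hold for all of them simultaneously, as claimed.
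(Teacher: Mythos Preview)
Your proof has a genuine gap at exactly the step you flagged as the main obstacle: the confinement to the high-values and compensation parts. You read \ref{asmp1} as speaking about pieces that \emph{only} a Type~A player could conceivably receive (i.e.\ pieces no one else values), and then argue that any interval meeting the common part is also valued by the common players and is therefore automatically excluded. But that is not what \ref{asmp1} and \ref{asmp2} assert, nor what Lemma~\ref{lem:u-ef-chosen} needs. In that lemma the property is invoked to conclude that the Type~A neighbour \emph{herself} --- who must receive a piece worth at least $\tfrac{4}{n}$ to avoid envying~$C$ --- is forced to take both of her compensation intervals. So \ref{asmp1} has to constrain \emph{every} piece of value $\ge \tfrac{4}{n}$ that could be assigned to that Type~A player in an envy-free division, not merely pieces that happen to be worthless to everyone else. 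Likewise \ref{asmp2} (which does not even contain the phrase ``can only be given'') says a Type~B player cannot \emph{receive} a piece worth $\ge \tfrac{2}{n}$ outside the described region.

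Under the correct reading your confinement argument collapses. A Type~A player values the common part at $1-\tfrac{8}{n}$ and a Type~B player at $1-\tfrac{4}{n}$, so there are plenty of connected sub-intervals of the common part worth $\ge \tfrac{4}{n}$ (resp.\ $\ge \tfrac{2}{n}$) to her; that a common player also values them is irrelevant, since nothing forbids handing such an interval to the Type~A or Type~B player. What actually rules this out is envy-freeness with respect to the $\tfrac{n}{2}$ common players: a sub-interval of the common part worth $\ge \tfrac{2}{n}$ to a Type~A or Type~B player must occupy more than a $\tfrac{2}{n}$ fraction of that part (her density there is below~$1$), and then each of the $\tfrac{n}{2}$ common players would need an equally large share of the common part to avoid envy, overfilling it. This counting argument over the common players is precisely the paper's proof and precisely the ``saturation'' step you announced you would avoid; without it, or something equivalent that actually uses envy-freeness, the lemma is not established.
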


\begin{proof}
Note that no player can get a piece which intersects both the common part and the compensation part (as such a connected piece contains the entire ``high values'' part of the cake). Thus,~\ref{asmp1} and~\ref{asmp2} follow by showing that in an envy-free division no Type A or Type B player can get a piece of value $\frac{2}{n}$ or more from common part alone. 

Suppose that we do give some Type A or Type B player a piece of value at least $\frac{2}{n}$ from the common part; such a piece must be of physical size of more than $\frac{2}{n}$-th of the total size of the common part. However, in order to avoid envy, we must then give each of the common players a piece of at least this size from the common part. This forces us to give the players in total at least $(\frac{n}{2}+1)\cdot\frac{2}{n} > 1$ of the size of the common part; i.e.~we need to give away more than $100\%$ of this part.
\end{proof}

We can now prove a bound on the utilitarian welfare of any envy-free division.

\begin{lemma}\label{u-ef-welfare}
Any envy-free division of the cake described above has utilitarian welfare of at most
$\big(\frac{1}{t} + \frac{12(t - 1)}{n}\big)k + 1$.
\end{lemma}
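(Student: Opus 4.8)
The plan is to write the utilitarian welfare $\sum_i u_i(x,i)$ of an arbitrary envy-free division $x$ as a sum of contributions coming from the three parts of the cake---the common part, the high-values part, and the compensation part---and to bound each player's contribution from each part separately. Three of the four resulting contributions are easy; the whole difficulty is concentrated in the Type A players.

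First the easy contributions. For the common part, note that per unit of physical length the common players assign the highest value: their density is uniform and totals $1$, against $1-\frac{8}{n}$ for Type A, $1-\frac{4}{n}$ for Type B, and $0$ for the chosen players. Hence for every player $i$ and every subregion $R$ of the common part, $v_i(R)\le |R|$, where $|R|$ is the length of $R$ measured as a fraction of the common part. Since the players receive pairwise-disjoint pieces, the total length occupied inside the common part is at most $1$, so the welfare drawn from the common part is at most $1$. For the chosen players, Lemma~\ref{lem:u-condition} establishes that properties \ref{asmp1} and \ref{asmp2} hold, so Lemma~\ref{lem:u-ef-chosen} applies in each of the $k$ sets and every chosen player gets a piece worth at most $\frac{1}{t}$; as the chosen players value only the high-values part, their total contribution is at most $\frac{k}{t}$. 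Finally, each Type B player values only her three compensation intervals, of total worth $\frac{4}{n}$, so her contribution is at most $\frac{4}{n}$, giving a Type B total of at most $k(t-1)\cdot\frac{4}{n}=\frac{4(t-1)k}{n}$.

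The crux is the Type A players. Each values one interval worth $\frac{4}{n}$ in the high-values part and two intervals of total worth $\frac{4}{n}$ in the compensation part, so a naive bound would charge her $\frac{8}{n}$. I would instead show that in an envy-free division no Type A player can draw positive value from both parts at once, bringing her contribution down to $\frac{4}{n}$. Suppose a Type A player $p$ of set $j$ received a piece $G$ meeting both her high-values interval (in the $j$-th high-values copy) and her compensation intervals (in the $j$-th compensation copy). As $G$ is a single interval it contains everything between these two locations, which for $k\ge 2$ forces $G$ to contain a complete intermediate copy. If that copy is a high-values copy of a set $j'>j$, then $G$ contains all $t$ of $C_{j'}$'s desired intervals and is worth $1$ to $C_{j'}$, so envy-freeness would demand that $C_{j'}$'s disjoint piece also be worth $1$---impossible. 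If instead $G$ contains a complete compensation copy of an earlier set, it swallows the entire $\frac{4}{n}$ worth of some Type B player $q$ of that set; since $q$ has no value outside the common part and her (now devoured) compensation intervals, and her common-part share is below $\frac{2}{n}$ by Lemma~\ref{lem:u-condition}, she cannot obtain a piece worth $\frac{4}{n}$ and so envies $p$. Either way envy-freeness fails, so each Type A player contributes at most $\frac{4}{n}$, for a Type A total of at most $k(2t-2)\cdot\frac{4}{n}=\frac{8(t-1)k}{n}$.

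Adding the four contributions gives $1+\frac{k}{t}+\frac{4(t-1)k}{n}+\frac{8(t-1)k}{n}=\big(\frac{1}{t}+\frac{12(t-1)}{n}\big)k+1$, as claimed. The main obstacle is precisely the Type A spanning argument above---ruling out a single connected piece that harvests value from both the high-values and the compensation part---together with the degenerate case $k=1$, where there is no intermediate copy; there I expect to need an analogous but more case-sensitive argument carried out inside the lone set, tracking exactly which of $C$'s intervals or which Type B interval a spanning piece would devour, and checking that the forced envy still arises.
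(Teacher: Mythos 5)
Your decomposition is the same one the paper uses --- chosen players capped at $\frac{k}{t}$ via Lemmas~\ref{lem:u-ef-chosen} and~\ref{lem:u-condition}, Type B players at $k(t-1)\cdot\frac{4}{n}$ from their compensation intervals, the common part at $1$, and Type A players at $\frac{4}{n}$ each outside the common part --- so the route is essentially the paper's, but you are usefully more explicit at exactly the two places where the paper is terse. First, the paper handles the possibility of Type A players drawing value from the common part by a trade-off computation: if an $\alpha$-fraction of the common part goes to non-common players, welfare is at most $\big(\frac{1}{t}+\frac{12(t-1)}{n}\big)k+(1-\alpha)+\alpha\big(1-\frac{4}{n}\big)$, strictly below the target. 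Your density observation --- every player values a common-part subinterval at no more than its length fraction, so \emph{all} welfare drawn from the common part is at most $1$ regardless of who receives it --- absorbs that case without any computation, and is the cleaner bookkeeping. Second, the paper dismisses a Type A piece that harvests both her high-values interval and her compensation intervals only with the parenthetical ``without devouring too much of the high-values part''; your argument (for $k\geq 2$ such a piece swallows either a complete high-values copy, making the corresponding chosen player --- whose entire value lies in that copy --- envious, or a complete compensation copy, making a Type B player envious since by Lemma~\ref{lem:u-condition} she cannot recover $\frac{4}{n}$ elsewhere) supplies the substance behind that parenthetical. You also drop the paper's opening move of exhibiting a division that attains the bound; since the lemma asserts only an upper bound, nothing is lost.

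The one loose end is the $k=1$ case, which you correctly flag: there is then no intermediate copy, and the paper's own proof is no more explicit there than you are. It does close, and in the way you predict, by the within-set analysis of Lemma~\ref{lem:u-ef-chosen}: a piece spanning from a Type A player's high-values interval into the compensation part devours every interval of $C$ to its right, so unless that player sits in the last pair it devours at least two such intervals (for $t\geq 3$), worth $\geq\frac{2}{t}$ to $C$, who can obtain at most $\frac{1}{t}$ --- envy; if she does sit in the last pair, the piece devours the full compensation triples of all earlier Type B players (again for $t\geq 3$), forcing envy via Lemma~\ref{lem:u-condition} as in your $k\geq 2$ argument; and the remaining case $t=2$, $k=1$ means $n=8$, where the devoured interval $B_1$ leaves the lone Type B player valuing the spanning piece at $\frac{3}{2n}$ while a direct size count in the common part (as in the proof of Lemma~\ref{lem:u-condition}, with only four common players) shows she cannot match it. Since Theorem~\ref{thm:duut} only invokes the construction with $t=\Theta(k)$ and $k\geq 8$, this residue is immaterial to the asymptotic result in any event.
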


\begin{proof}
Consider the following division. We divide the common part equally between all the common players; this gives each common player value of $\frac{2}{n}$, and contributes a total of $1$ to the utilitarian welfare. We next divide the high-values part: we give the first desired interval of each chosen player to that player; these players thus contribute a total of $\frac{k}{t}$ to the welfare. We also give each Type A player her (single) desired interval from this part. (This leaves us with $(t-1)k$ unallocated intervals of this part that are desired only by the chosen players; we add each such interval to the piece of one of the players whose pieces are closest.) The collective contribution of the Type A players to the welfare is $2k(t-1)\cdot\frac{4}{n}$. Finally, we divide the compensation part between the Type B players: we give each such player an piece containing all her desired intervals from the compensation part, adding the remaining intervals to any of the closest pieces. Thus, the Type B players collectively contribute $k(t-1)\cdot\frac{4}{n}$ to the welfare; adding everything up, we get a division with utilitarian welfare of $\big(\frac{1}{t} + \frac{12(t - 1)}{n}\big)k + 1$.

We first note that it is easy to verify that this division is indeed envy-free; we complete the proof by arguing that no envy-free division can yield higher utilitarian welfare. Lemma~\ref{lem:u-ef-chosen} combined with Lemma~\ref{lem:u-condition} implies that there is no envy-free division in which contribution of the chosen players to the welfare exceeds $\frac{k}{t}$. It also follows from Lemma~\ref{lem:u-condition} that in any envy-free division the contribution of the Type B players to the welfare is bounded by $k(t-1)\cdot\frac{4}{n}$. Clearly, there is also no way to increase the contribution of the common players to the welfare to beyond $1$. We are thus left with the Type A players: observe that the only way to give a Type A player utility exceeding $\frac{4}{n}$ (without devouring too much of the high-values part) is to give her a piece intersecting both the high-values part and the common part. However, this is not profitable: suppose that we have such a division, in which some $\alpha$-fraction of the common part is given to players outside the set of common players. By our observations above, the utilitarian welfare of this division is bounded by $\big(\frac{1}{t} + \frac{12(t - 1)}{n}\big)k + (1-\alpha) + \alpha\cdot(1-\frac{4}{n}) < \big(\frac{1}{t} + \frac{12(t - 1)}{n}\big)k + 1$.
\end{proof}

The following lemma, combined with Lemma~\ref{u-ef-welfare}, completes the proof for Theorem~\ref{thm:duut}.

\begin{lemma}
By throwing away $k(t-1)$ intervals of the cake, we can achieve an envy-free division of the remaining cake with utilitarian welfare exceeding $k+2$.
\end{lemma}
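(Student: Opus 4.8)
The plan is to exhibit one explicit envy-free partial division $y$ and then lower-bound its utilitarian welfare directly. The intervals I discard are exactly the $k(t-1)$ ``middle'' Type B intervals, i.e.\ the single interval of value $\frac1n$ belonging to each Type B player $3i+3$ (the interval called $B_2$ in the proof of Lemma~\ref{lem:u-ef-chosen}), one per Type B player. The remaining cake is allotted as follows. Each chosen player $C$ receives her entire high-values block as a single connected piece --- spanning all $t$ of her desired intervals together with the $2(t-1)$ Type A intervals interleaved between them --- so that $u_C(y,C)=1$. Each Type A player receives the connected piece made of her two compensation intervals together with the one Type B interval lying between them (so $3i+1$ gets $A_1B_1A_2$ and $3i+2$ gets $A_3B_3A_4$), which is worth $\frac4n$ to her. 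Finally, the common part is cut into $k(4t-3)$ equal-length subintervals, one of which is handed to each of the $k(3t-2)$ common players and each of the $k(t-1)$ Type B players.

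The bulk of the work is verifying that $y$ is envy-free, and here the Type B players are the crux --- this is exactly where the discarding is forced. Since each chosen player now holds a Type A high-values interval, every Type A player values $C$'s piece at $\frac4n$ and so must receive at least $\frac4n$; the only way to supply this inside the compensation part is to let both Type A players of a triple swallow the flanking Type B intervals $B_1$ and $B_3$. The surviving interval $B_2$ is then worth only $\frac1n<\frac3{2n}$ to the Type B player, so it can never shield her from envying the Type A holders of $B_1,B_3$; this is precisely why $B_2$ is thrown away and the Type B player is relocated to the common part. It then remains to check that her single common subinterval is worth at least $\frac3{2n}$ to her (so she does not envy a Type A player) while being no more valuable, in any common player's eyes, than that player's own equal piece. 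Because common players value the common part at $1$ and Type B players value it at $1-\frac4n$, the equal cut makes all of these pieces the same physical size, which settles envy between common and Type B players; the inequality $\big(1-\frac4n\big)\frac1{k(4t-3)}\ge\frac3{2n}$ then reduces to $(12t-8)(1-\frac4n)\ge 12t-9$ and holds in the regime of interest. The chosen, common, and Type A players are easier: $C$ and the common players assign value $0$ outside their own valued region, and each Type A player assigns value at most $(1-\frac8n)\frac1{k(4t-3)}<\frac4n$ to any common subinterval and value $0$ to the compensation pieces of the other Type A players.

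Finally I sum the contributions. The chosen players give $k\cdot1=k$; the $2(t-1)k$ Type A players give $2(t-1)k\cdot\frac4n=\frac{4(t-1)}{3t-2}$; and the common and Type B players together essentially reconstitute the full value $1$ of the common part (their shares sum to $1-\frac{4(t-1)}{n(4t-3)}$). Adding these gives utilitarian welfare $k+1+\frac{4(t-1)}{3t-2}-o(1)$, which exceeds $k+2$ --- the decisive gain being the jump of the chosen players from the $\frac kt$ permitted by Lemma~\ref{lem:u-ef-chosen} all the way up to $k$. I expect the genuinely delicate step to be the envy-freeness of the Type B players: it is the single place where all three parts of the cake interact, and it is what simultaneously pins down which intervals must be discarded and why an equal cut of the common part --- rather than any welfare-greedier cut --- is the only one compatible with no-envy.
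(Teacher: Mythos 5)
Your proposal is correct and matches the paper's proof essentially verbatim: you discard the same $k(t-1)$ intervals (each Type B player's middle interval $B_2$), allocate identically (whole high-values blocks to the chosen players, the pieces $A_1B_1A_2$ and $A_3B_3A_4$ to the Type A pairs, and an equal-length split of the common part among the $k(4t-3)$ common and Type B players), and verify envy-freeness the same way, with your Type B condition $\bigl(1-\frac{4}{n}\bigr)\frac{1}{k(4t-3)}\ge\frac{3}{2n}$ being exactly the paper's assumption $k\ge 8$. The only difference is presentational --- you substitute $n=2k(3t-2)$ to obtain the exact welfare $k+1+\frac{4(t-1)}{3t-2}-\frac{4(t-1)}{n(4t-3)}$ where the paper instead manipulates the bounds $6k(t-1)<n<8k(t-1)-1$ --- and, like the paper, your final inequality implicitly requires $t>2$ (at $t=2$ the total falls short of $k+2$).
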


\begin{proof}
Suppose that for each Type B player we throw away the one interval worth $\frac{1}{n}$ in the compensation part. We can now give each chosen player a piece containing all of her desired intervals; this collectively contributes $k$ to the utilitarian welfare. Also, since these players have received all their desired pieces, they will clearly envy no other players. Type A players now consider the pieces given to the chosen players as worth $\frac{4}{n}$; we can give each of them a piece of the same value from the compensation part. This contributes $\frac{4(2t-2)k}{n}$ to the welfare, and ensures that Type A players do not envy chosen players. Finally, the remaining players (Type B players and common players) share the common part such that each of them gets a piece of the same physical size. This guarantees the each common player value of $\frac{1}{\frac{n}{2}+(t-1)k}$ and each Type B player value of 
\begin{equation*}
\frac{1-\frac{4}{n}}{\frac{n}{2}+(t-1)k} = \frac{\frac{n-4}{n}}{\frac{n}{2} + \frac{n}{6} - \frac{k}{3}} = \frac{3n-12}{(2n-k)n} \geq \frac{3}{2n}
\end{equation*}
(since $(t-1)k = \frac{n}{6} - \frac{k}{3}$ and assuming $k\geq 8$). It thus follows that none of these players envy any other players (the other players clearly do not envy them), and their contribution to the welfare is $\frac{\frac{n}{2}+k(t-1)\cdot(1-\frac{4}{n})}{\frac{n}{2}+(t-1)k}$.
The utilitarian welfare of this division is therefore:
\begin{align*}
 k + \frac{8(t-1)k}{n} + \frac{\frac{n}{2}+k(t-1)(1-\frac{4}{n})}{\frac{n}{2}+k(t-1)} & = k + \frac{8(t-1)k}{n} + 1 - \frac{\frac{8(t-1)k}{n}}{n+2k(t-1)} \\
 &= k + 1 + \frac{8(t-1)k}{n}\cdot\left(1 - \frac{1}{n+2k(t-1)} \right) 
\end{align*}
However, since $6k(t-1) < n < 8k(t-1)- 1$ for $t > 2$, we have
\begin{align*}
k + 1 + \frac{8(t-1)k}{n}\cdot\left(1 - \frac{1}{n+2k(t-1)} \right) & > k + 1 + \frac{8(t-1)k}{n}\cdot\left(1 - \frac{1}{8k(t-1)} \right) \\
 & = k + 1 + \frac{8(t-1)k - 1}{n} > k + 2
\end{align*}
as stated.
\end{proof}

\section{Egalitarian Welfare}

\begin{theorem}\label{thm:dueg}
The egalitarian dumping paradox with $n$ players may get arbitrarily close to $\frac{n}{3}$, and this bound is asymptotically tight.
\end{theorem}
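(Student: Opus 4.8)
The plan is to prove the two directions separately, and almost all the work lies in the lower bound. The upper bound is immediate from the machinery already in place: by Proposition~\ref{pro:de-pof} the egalitarian dumping paradox is bounded by the egalitarian Price of Envy-Freeness, and \cite{AD10} bounds the latter by $\frac{n}{2}$. Hence the paradox is $O(n)$, so a construction whose paradox approaches $\frac{n}{3}$ pins the growth rate at $\Theta(n)$ and makes the bound asymptotically tight (the literal constant is not pinned between $\frac13$ and $\frac12$ by this theorem, matching the paper's remark that the construction only \emph{almost} meets the $\frac{n}{2}$ upper bound). It thus remains to exhibit, for each $n$, a cake instance whose best complete envy-free division has egalitarian welfare at most about $\frac{1}{n}$, while some envy-free \emph{partial} division obtained by discarding a small set of intervals gives \emph{every} player value about $\frac{1}{3}$.

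For the construction I would follow the pattern of the utilitarian case. I would include a large \emph{common} region valued at nearly $1$ by essentially all the players, so that in any complete division envy-freeness forces this region to be shared in (almost) equal physical portions among the many players who value it; this alone squeezes at least one player down to value $\approx\frac1n$ and caps the egalitarian welfare of every complete envy-free division at $\approx\frac1n$. The rest of the cake would carry structured auxiliary regions — in the spirit of the ``high-values'' and ``compensation'' parts — together with thin \emph{separator} intervals whose sole purpose is to decouple otherwise-conflicting demands. The valuations on the auxiliary regions would be arranged so that each player owns a connected block worth $\approx\frac13$ to her, yet these blocks can be handed out simultaneously, consistently with a single left-to-right ordering, only once the separators have been removed.

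To prove the complete-division bound rigorously I would isolate the ``squeeze'' as a lemma in the style of Lemma~\ref{lem:u-ef-chosen}: assuming the connectedness/ordering constraint, show that no complete envy-free division can avoid allocating the common region in near-equal shares, so that some player necessarily receives value $\approx\frac1n$. For the partial division I would specify exactly which separator intervals to discard, and then verify (i) that the resulting allocation is envy-free — each player weakly prefers her own block, both against her structured region and against the common region — and (ii) that even the worst-off player gets value tending to $\frac13$, so that the ratio tends to $\frac{1/3}{1/n}=\frac{n}{3}$.

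The main obstacle, I expect, is the simultaneous design of the valuations: the \emph{same} instance must make the common region \emph{unavoidable} in every complete envy-free division (forcing the $\frac1n$ squeeze) and yet \emph{dispensable} once the separators are discarded (freeing everyone to take the $\frac13$ block). Getting the auxiliary valuations to respect a single connected ordering while still realizing the $\frac13$ blocks, and — the truly delicate point — proving the squeeze holds for \emph{all} complete envy-free divisions rather than just one convenient family, is where the argument will be hardest; the envy-free verification of the partial division, in which the cross-comparisons for the worst-off players are checked, is also where the constant $\frac13$ ultimately gets pinned down.
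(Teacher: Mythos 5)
Your upper-bound half is exactly the paper's: Proposition~\ref{pro:de-pof} plus the $\frac{n}{2}$ bound on the egalitarian Price of Envy-Freeness from~\cite{AD10}, with the same (correct) reading of ``asymptotically tight.'' The gap is in the lower bound, and it is not merely that you leave the construction unspecified --- the mechanism you propose is the wrong one for the egalitarian objective. You transplant the utilitarian template: a common region valued at nearly $1$ by essentially all players, forcing near-equal physical sharing and hence squeezing someone to $\approx\frac{1}{n}$ in every complete envy-free division. First, ``nearly $1$'' on the common region is flatly inconsistent with the same players each owning an auxiliary block worth $\approx\frac{1}{3}$, since the valuations are probability measures. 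Second, even after rebalancing (say $\frac{2}{3}$ common, $\frac{1}{3}$ block), envy-freeness does \emph{not} force near-equal sharing of the common region among everyone who values it: a player holding a $\frac{1}{3}$ block tolerates any common slice worth up to $\frac{1}{3}$ to her, so a complete envy-free division could split the common region between just two players and hand out the blocks, defeating the $\frac{1}{n}$ squeeze --- unless your separators provably prevent simultaneous block allocation in \emph{complete} divisions, which is precisely the cascading-envy argument you defer as ``the truly delicate point.'' Third, and most decisively: the utilitarian construction makes its common part ``unavoidable'' by populating it with $\frac{n}{2}$ dedicated common players who value nothing else; but egalitarian welfare is a minimum over \emph{all} players, so any such player receives $O(\frac{1}{n})$ in the partial division as well, capping the paradox at $O(1)$. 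The utilitarian objective tolerates poor common players; the egalitarian one cannot, which is why the paper's egalitarian construction contains no common part at all.

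What the paper does instead is squeeze a \emph{single} designated player $n$, exploiting that a minimum is capped by any one player. Player $n$ has $n$ disjoint intervals, each worth $\frac{1}{n}$ to her: one at the head of each of the $2k$ blocks in the main part and $k+1$ in a separate ``last player'' part. Any connected piece containing two of them either swallows an entire block --- worth more than $\frac{1}{2}$ to some group player, who then envies --- or sits inside the last-player part, forcing players $3j-2$ and $3j-1$ onto pieces that each contain one of player $3j$'s two intervals of value $\frac{1-\epsilon}{3}$; player $3j$ must then settle for her remaining interval, and the $\epsilon$-valued separator $I_j$, which only she values, must in a complete division be absorbed by an adjacent piece, tipping her valuation of that piece above her own and creating envy (the generalization of Lemma~\ref{lem:player4}). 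Hence every complete envy-free division has egalitarian welfare at most $\frac{1}{n}$, while discarding the $k$ separators $I_j$ lets player $n$ take the whole last-player part $\left(\frac{k+1}{n} > \frac{1}{3}\right)$ and every group player take a block or interval worth at least $\frac{1-\epsilon}{3}$, giving the ratio $\frac{(1-\epsilon)n}{3}$. Note also the role reversal relative to your sketch: the discarded separators do not ``free the blocks from the common region''; they are what makes consolidating the squeezed player's scattered $\frac{1}{n}$-intervals compatible with envy-freeness. To salvage your proposal you would have to abandon the common-region squeeze entirely and rebuild around this single-player cascade.
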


We will show that for every $k\in\mathbb{N}$, there exists a cake cutting instance with $n=3k+1$ players in which throwing away $k$ intervals of the cake can improve the egalitarian welfare of the best envy-free division by a factor arbitrarily close to $\frac{n}{3}$. The matching upper bound follows from Proposition~\ref{pro:de-pof}, combined with Theorem 5 of~\cite{AD10}, which shows an upper bound of $\frac{n}{2}$ on the Price of Envy-Freeness. 

To illustrate the main ideas of our lower bound construction, we begin with presenting the simple case of $n=4$. Fix some small $\epsilon>0$. We will have a cake with two parts: the ``main part'' and the ``last player'' part. In the main part, we have two ``blocks'' of four intervals: in both blocks, the first interval is of value $\frac{1}{4}$ to player 4 and the third interval is of value $\frac{1-\epsilon}{3}$ to player 3. The remaining intervals (second and fourth) of the first block are each of value $\frac{1+\epsilon}{4}$ to player 1, while those of the second block are each of value $\frac{1+\epsilon}{4}$ to player 2. The first block is followed by an interval of value $\epsilon$ to player 3; we denote this interval by $I$. The second block is followed by an interval of value $\frac{1-\epsilon}{3}$ to player 3. In the ``last player'' part we have two intervals of value $\frac{1}{4}$ to player 4; between these intervals there are two more intervals, one considered by player 1 as worth $\frac{1-\epsilon}{2}$, and the other considered by player 2 as worth $\frac{1-\epsilon}{2}$. Figure~\ref{fig:prefeg} illustrates these preferences graphically.

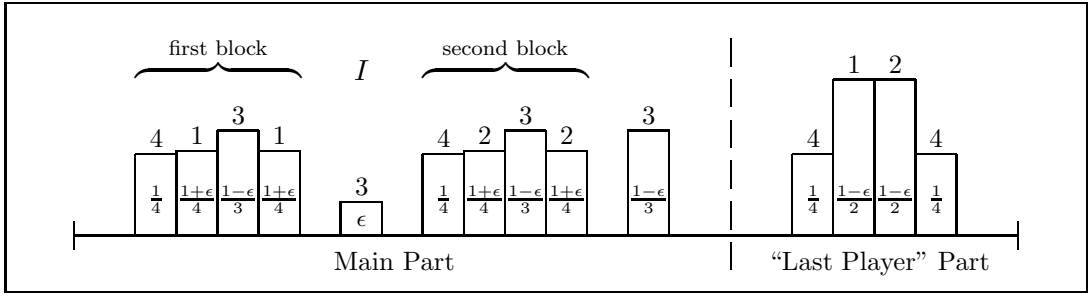
\begin{figure}[hbt]
\begin{center}
\framebox[1.1\width]{

\setlength{\unitlength}{9mm}
\begin{picture}(14.1,4)

\put(0,0.7){\line(1,0){13.8}}
\put(0,0.5){\line(0,1){0.4}}
\put(13.8,0.5){\line(0,1){0.4}}

\newsavebox{\rectFour}
\savebox{\rectFour}(0.6,1.2)[bl]{
	\put(0,0){\line(0,1){1.2}} \put(0,1.2){\line(1,0){0.6}}
	\put(0.6,0){\line(0,1){1.2}}
	\put(0.19,0.45){\footnotesize $\frac{1}{4}$}
}

\newsavebox{\rectFourEps}
\savebox{\rectFourEps}(0.6,1.25)[bl]{
	\put(0,0){\line(0,1){1.25}} \put(0,1.25){\line(1,0){0.6}}
	\put(0.6,0){\line(0,1){1.25}}
	\put(0.01,0.45){\footnotesize $\frac{1+\epsilon}{4}$}
}

\newsavebox{\rectThree}
\savebox{\rectThree}(0.6,1.55)[bl]{
	\put(0,0){\line(0,1){1.55}} \put(0,1.55){\line(1,0){0.6}}
	\put(0.6,0){\line(0,1){1.55}}
	\put(0.01,0.45){\footnotesize $\frac{1-\epsilon}{3}$}
}

\newsavebox{\rectTwo}
\savebox{\rectTwo}(0.6,2.3)[bl]{
	\put(0,0){\line(0,1){2.3}} \put(0,2.3){\line(1,0){0.6}}
	\put(0.6,0){\line(0,1){2.3}}
	\put(0.01,0.45){\footnotesize $\frac{1-\epsilon}{2}$}
}

\newsavebox{\rectEps}
\savebox{\rectEps}(0.6,0.5)[bl]{
	\put(0,0){\line(0,1){0.5}} \put(0,0.5){\line(1,0){0.6}}
	\put(0.6,0){\line(0,1){0.5}}
	\put(0.23,0.15){\footnotesize $\epsilon$}
}

\put(0.9,0.7){\usebox{\rectFour}} \put(1.5,0.7){\usebox{\rectFourEps}} \put(2.1,0.7){\usebox{\rectThree}} \put(2.7,0.7){\usebox{\rectFourEps}}
\put(1.11,2.0){\small 4} \put(1.71,2.05){\small 1} \put(2.31,2.35){\small 3} \put(2.91,2.05){\small 1}
\put(0.9,3.0){$\overbrace{\qquad\qquad\quad\ \ }^\text{first block}$}

\put(3.9,0.7){\usebox{\rectEps}} \put(4.11,1.3){\small 3}
\put(4.1,3.0){$I$}

\put(5.1,0.7){\usebox{\rectFour}} \put(5.7,0.7){\usebox{\rectFourEps}} \put(6.3,0.7){\usebox{\rectThree}} \put(6.9,0.7){\usebox{\rectFourEps}}
\put(5.31,2.0){\small 4} \put(5.91,2.05){\small 2} \put(6.51,2.35){\small 3} \put(7.11,2.05){\small 2}
\put(5.1,3.0){$\overbrace{\qquad\qquad\quad\ \ }^\text{second block}$}

\put(8.1,0.7){\usebox{\rectThree}}
\put(8.31,2.35){\small 3}

\multiput(9.6, 0.2)(0,0.6){6}{\line(0,1){0.4}}
\put(3.8,0.2){\small Main Part}
\put(10.15,0.2){\small ``Last Player'' Part}

\put(10.5,0.7){\usebox{\rectFour}} \put(11.1,0.7){\usebox{\rectTwo}} \put(11.7,0.7){\usebox{\rectTwo}} \put(12.3,0.7){\usebox{\rectFour}}
\put(10.71,2.0){\small 4} \put(11.31,3.1){\small 1} \put(11.91,3.1){\small 2} \put(12.51,2.0){\small 4}

\end{picture}} 
\caption{Preferences of all players when $n=4$. The number above each column denotes which player has that valuation.}\label{fig:prefeg}
\end{center}
\end{figure}


\begin{lemma}\label{lem:player4}
In every envy-free division of the above cake, player 4 has utility at most $\frac{1}{4}$.
\end{lemma}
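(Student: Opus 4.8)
The plan is to argue by contradiction: assume some complete envy-free division gives player~4 value strictly more than $\frac14$, and exhibit a player who must envy. Since player~4 values only her four desired intervals --- each worth exactly $\frac14$, one in each block of the main part and two in the ``last player'' part --- and her allotted piece is a single interval, any value exceeding $\frac14$ forces that piece to meet at least two of these four intervals, and hence to swallow everything lying strictly between the leftmost and rightmost of them. I would therefore organize the proof by \emph{which} of player~4's intervals her piece straddles, and in each resulting configuration point to an envying player. Completeness will be used crucially, so I keep track of which intervals are forced to be allocated.

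First I would knock out, in one line, every configuration in which player~4's piece reaches from (at most) her block-2 interval across to (at least) her first ``last player'' interval. Any such piece engulfs \emph{both} of player~3's intervals of value $\frac{1-\epsilon}{3}$ that sit in that span, so player~3 values player~4's piece at $\geq\frac{2(1-\epsilon)}{3}$. Because player~3's entire cake is worth $1$ to her, her own piece can be worth at most $1-\frac{2(1-\epsilon)}{3}=\frac{1+2\epsilon}{3}$, which is smaller than $\frac{2(1-\epsilon)}{3}$ for small $\epsilon$; hence player~3 necessarily envies player~4. This single valuation estimate eliminates all straddles meeting three or four of player~4's intervals, as well as every ``long'' two-interval straddle, leaving exactly two surviving cases: her piece covers both block-1 positions (spanning her block-1 and block-2 intervals), or she takes both of her ``last player'' intervals.

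In the first surviving case player~4's piece covers the whole first block, so it contains both of player~1's intervals there, worth $\frac{1+\epsilon}{4}$ each. Player~1 is then left with only her single ``last player'' interval of value $\frac{1-\epsilon}{2}$, whereas player~4's piece is worth $\frac{1+\epsilon}{2}>\frac{1-\epsilon}{2}$ to player~1; so player~1 envies player~4, and this case is done.

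The delicate case --- and the main obstacle --- is when player~4 takes both of her ``last player'' intervals. Her piece then engulfs the two middle intervals of that part, worth $\frac{1-\epsilon}{2}$ to player~1 and to player~2 respectively, so to avoid envying player~4 each of players~1 and~2 must recover value $\frac{1-\epsilon}{2}$ from the main part alone. Since a single block interval is worth only $\frac{1+\epsilon}{4}<\frac{1-\epsilon}{2}$ to its owner, player~1 is forced onto a connected piece spanning both of her block-1 intervals, and likewise player~2 both of her block-2 intervals; each thereby swallows the player-3 interval of value $\frac{1-\epsilon}{3}$ sitting between her two intervals. Player~3 is thus stripped of two of her three $\frac{1-\epsilon}{3}$-intervals, and I would finish with a short case analysis on where player~3's connected piece lies relative to players~1 and~2 (leftmost, between them, or rightmost): in every placement player~3's piece is worth at most $\frac{1-\epsilon}{3}$ to her, while the tiny leftover interval $I$ of value $\epsilon$ must --- \emph{because the division is complete} --- be allotted to player~1 or player~2, whose piece then exceeds $\frac{1-\epsilon}{3}$ in player~3's eyes, producing envy. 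The essential point is exactly this forced allocation of $I$: it converts player~3's indifference into strict envy, which is precisely why the statement requires completeness (discarding $I$ is what renders the paradoxical partial division envy-free).
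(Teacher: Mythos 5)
Your proof is correct and takes essentially the same route as the paper's: after eliminating the easy straddles, you reduce to the case where player~4 takes both ``last player'' intervals, force players~1 and~2 to span their respective blocks (swallowing player~3's middle $\frac{1-\epsilon}{3}$-intervals), push player~3 onto her rightmost interval, and use completeness to force the interval $I$ onto player~1 or~2, producing envy --- exactly the paper's argument. Your dispatch of the long straddles via player~3's valuation (rather than the paper's observation that player~4 then devours a whole block, worth more than $\frac{1}{2}$ to player~1 or~2) is only a minor reorganization of the easy cases.
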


\begin{proof}
Suppose otherwise, then it has to be that the piece of player 4 intersects at least two of her four desired intervals. If it intersects two of her first three desired pieces, we get that player 4 completely devours at least one of the blocks; however, each block is worth strictly more than $\frac{1}{2}$ to some player, and that player will envy player 4.

Thus, this interval must be contained in the ``last player'' section, and intersect player 4's third and fourth desired intervals. 
However, if this is the case, the piece of player $4$ is worth $\frac{1-\epsilon}{2}$ to both player $1$ and player $2$. To ensure envy-freeness, they both need to get a piece worth at least  $\frac{1-\epsilon}{2}$. Thus, if $\epsilon$ is small enough, player $1$ must get a piece containing the third interval of the first block, and player 2 must get a piece containing the third interval of the second block. Each of these pieces are therefore worth $\frac{1-\epsilon}{3}$ to player 3; this forces player 3 to get the rightmost of her desired intervals in order to avoid envy. Hence, the interval $I$ must be split between players 1 and 2. However, this way at least one of them will end up with a piece worth more than $\frac{1}{3}$ to player 3, making her envious; a contradiction.
\end{proof}

This implies that no envy-free division can have egalitarian welfare exceeding $\frac{1}{4}$. We now show that discarding one piece of the cake allows us to significantly increase the egalitarian welfare while maintaining envy-freeness.

\begin{lemma}
In the above cake, discarding the interval $I$ allows for an envy-free division with egalitarian welfare of $\frac{1 - \epsilon}{3}$.
\end{lemma}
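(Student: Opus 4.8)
\section*{Proof proposal}

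The plan is to exhibit one explicit envy-free partial division and verify the two quantities directly; there is no optimization to do, only a construction to check. Reading the cake left to right and labelling each desired interval by the player who values it, the structure is: first block $(4,1,3,1)$, then the interval $I$ (valued $\epsilon$ only by player~$3$), then second block $(4,2,3,2)$, then player~$3$'s lone interval of value $\frac{1-\epsilon}{3}$, then the ``last player'' part $(4,1,2,4)$. I would discard $I$ and assign connected pieces as follows: give player~$1$ everything from the left endpoint of the cake up to the left edge of $I$ (the entire first block); give player~$2$ everything from the right edge of $I$ up to the left edge of player~$3$'s lone interval (the entire second block); give player~$3$ that lone interval (extended up to the ``last player'' part); and give player~$4$ the whole ``last player'' part, up to the right endpoint of the cake. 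The gap vacated by $I$ sits exactly between the pieces of players~$1$ and~$2$, so all four pieces are connected and pairwise disjoint, and every point of $[0,1]$ except $I$ is allocated; this is a legal partial division discarding a single interval.

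Next I would read off the utilities. Player~$1$ collects her two intervals of value $\frac{1+\epsilon}{4}$ in the first block, for $u_1 = \frac{1+\epsilon}{2}$; symmetrically $u_2 = \frac{1+\epsilon}{2}$. Player~$4$ collects her two $\frac14$-intervals at the ends of the ``last player'' part, for $u_4 = \frac12$. Player~$3$ receives exactly her single interval of value $\frac{1-\epsilon}{3}$, so $u_3 = \frac{1-\epsilon}{3}$. The minimum of these is $u_3$, hence the egalitarian welfare of the division is exactly $\frac{1-\epsilon}{3}$, as claimed (and since $\frac{1-\epsilon}{3} > \tfrac14$ for small $\epsilon$, this already beats the $\tfrac14$ ceiling on complete envy-free divisions from Lemma~\ref{lem:player4}).

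It remains to check envy-freeness, which is the only real content. Players~$1$ and~$2$ value each other's piece and player~$3$'s piece at $0$, and value player~$4$'s piece at only $\frac{1-\epsilon}{2}$ (through their single desired interval inside the ``last player'' part), which is strictly below their own $\frac{1+\epsilon}{2}$; so they envy no one. Player~$4$ values each block at $\frac14 < \frac12$ and player~$3$'s piece at $0$, so she envies no one. The delicate case is player~$3$: each block contains \emph{exactly one} of her $\frac{1-\epsilon}{3}$-intervals and nothing else she values, so she assigns value $\frac{1-\epsilon}{3}$ to player~$1$'s piece and to player~$2$'s piece, and $0$ to player~$4$'s piece; all of these are $\le$ her own $\frac{1-\epsilon}{3}$, so she does not envy either.

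The main obstacle is precisely this last, tight comparison, and it is exactly where discarding pays off. Had $I$ been retained, it would have had to be appended to player~$1$'s or player~$2$'s block, or split between them; in any case at least one of those two pieces would acquire extra value $\ge \tfrac{\epsilon}{2}>0$ in player~$3$'s eyes, pushing it strictly above $\frac{1-\epsilon}{3}$ and turning player~$3$'s indifference into strict envy (this is the same obstruction exploited in the proof of Lemma~\ref{lem:player4}). Thus I would emphasize that removing $I$ is what converts the binding constraint from strict envy to equality, and that once this single comparison is neutralized the remaining envy checks are all strict and routine.
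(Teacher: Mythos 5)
Your proposal is correct and is essentially identical to the paper's proof: the paper discards $I$ and makes exactly the same assignment (first block to player $1$, second block to player $2$, the lone $\frac{1-\epsilon}{3}$ interval to player $3$, the entire ``last player'' part to player $4$), leaving the envy and welfare checks as routine verification. Your explicit check that player $3$'s comparisons are ties (permitted, since envy-freeness only requires $u_i(y,i)\geq u_i(y,j)$) and your remark on why retaining $I$ would force strict envy simply spell out what the paper leaves implicit.
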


\begin{proof} 
Suppose we discard the piece $I$. We can now allocate the entire ``last player'' part to player $4$, giving her utility $\frac{1}{2}$. In the main part, we give player 1 the entire first block, and player 2 the entire second block. Finally we give player 3 the interval following the second block. It is easy to verify that this division is indeed envy-free, and that its egalitarian welfare is $\frac{1-\epsilon}{3}$.
\end{proof}

We have shown a dumping paradox of $\frac{4(1-\epsilon)}{3}$ for the case of $n=4$ players. We will now generalize this construction, proving Theorem~\ref{thm:dueg}.

\begin{proof}[Proof of Theorem~\ref{thm:dueg}]
Similarly to the example above, we will have one player (player $n$) who can only get a big piece of cake (without causing envy) when some of the cake is discarded; this is the player creating the dumping paradox. Instead of the other three players, we will now have $3k$ players, divided into $k$ groups of $3$ players. The cake will again be composed of a main part and a ``last player'' part.

For every $1\leq j\leq k$, the players $3j-2,3j-1$ and $3j$ will form a ``group'', whose preferences resemble those of player 1,2 and 3 (respectively) in the case of $n=4$. For each such group, we will again have two blocks in the main part: in both blocks, the first interval is of value $\frac{1}{n}$ to player $n$ and the third interval is of value $\frac{1 - \epsilon}{3}$ to player $3j$. The remaining intervals (second and fourth) of the first of these blocks are each of value $\frac{1+\epsilon}{4}$ to player $3j-2$, while those of the second block are each of value $\frac{1+\epsilon}{4}$ to player $3j-1$. The two blocks are separated by an interval $I_j$ of value $\epsilon$ to player $3j$, and followed by an interval of value $\frac{1-\epsilon}{3}$ to player $3j$. 

In the ``last player'' part, we have $k+1$ intervals, each worth $\frac{1}{n}$ to player $n$. Separating the $j$-th and $j+1$-th of these intervals are two pieces: one of value $\frac{1 - \epsilon}{2}$ to player $3j-2$ and the other of value $\frac{1 - \epsilon}{2}$ to player $3j-1$. The reader is referred to Figure~\ref{fig:prefeg2} for a graphical representation of the players' preferences.
 
 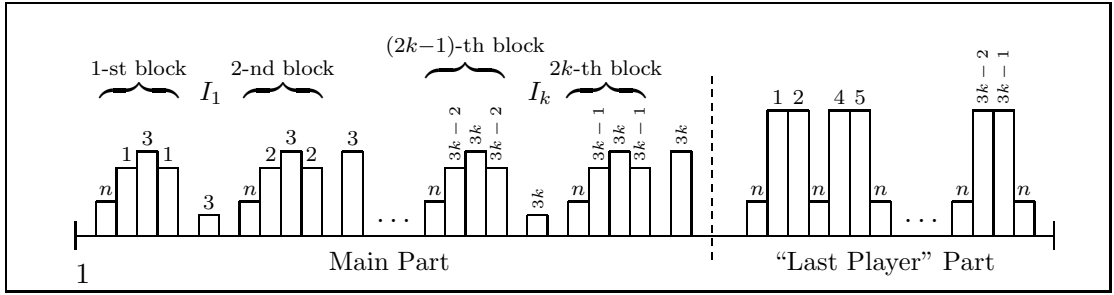
\begin{figure}[hbt]
\begin{center}
\framebox[1.1\width]{

\setlength{\unitlength}{9mm}
\begin{picture}(14.4,4)

\put(0,0.7){\line(1,0){14.3}}
\put(0,0.5){\line(0,1){0.4}}
\put(14.3,0.5){\line(0,1){0.4}}

\newsavebox{\bBlock}
\savebox{\bBlock}(1.2,1.24)[bl]{
	\put(0,0){\line(0,1){0.5}} \put(0.3,0){\line(0,1){1.0}}
	\put(0.6,0){\line(0,1){1.24}} \put(0.9,0){\line(0,1){1.24}}
	\put(1.2,0){\line(0,1){1.0}}
	
	\put(0,0.5){\line(1,0){0.3}} \put(0.3,1.0){\line(1,0){0.3}}
	\put(0.6,1.24){\line(1,0){0.3}} \put(0.9,1.0){\line(1,0){0.3}}
}

\newsavebox{\blockOne}
\savebox{\blockOne}(1.8,1.24)[bl]{
	\put(0,0){\usebox{\bBlock}}
	
	\put(1.5,0){\line(0,1){0.3}} \put(1.8,0){\line(0,1){0.3}}
	\put(1.5,0.3){\line(1,0){0.3}}
}

\newsavebox{\blockTwo}
\savebox{\blockTwo}(1.8,1.24)[bl]{
	\put(0,0){\usebox{\bBlock}}
	
	\put(1.5,0){\line(0,1){1.24}} \put(1.8,0){\line(0,1){1.24}}
	\put(1.5,1.24){\line(1,0){0.3}}
}

\newsavebox{\blockLast}
\savebox{\blockLast}(0.9,1.84)[bl]{
	\put(0,0){\line(0,1){0.5}} \put(0.3,0){\line(0,1){1.84}}
	\put(0.6,0){\line(0,1){1.84}}	\put(0.9,0){\line(0,1){1.84}}
	
	\put(0,0.5){\line(1,0){0.3}} \put(0.3,1.84){\line(1,0){0.6}}
}

\put(0.3,0.7){\usebox{\blockOne}} 
\put(0.35,1.3){\scriptsize $n$} \put(0.67,1.8){\scriptsize 1} \put(0.97,2.04){\scriptsize 3} \put(1.27,1.8){\scriptsize 1}
\put(0.22,2.7){$\overbrace{\qquad\ \ }^{1\text{-st block}}$}

\put(1.87,1.1){\scriptsize 3}
\put(1.81,2.7){\small $I_1$}

\put(2.4,0.7){\usebox{\blockTwo}}
\put(2.45,1.3){\scriptsize $n$} \put(2.77,1.8){\scriptsize 2} \put(3.07,2.05){\scriptsize 3} \put(3.37,1.8){\scriptsize 2}
\put(2.27,2.7){$\overbrace{\qquad\ \ }^{2\text{-nd block}}$}

\put(3.97,2.04){\scriptsize 3}

\put(4.4,0.9){\dots}

\put(5.1,0.7){\usebox{\blockOne}} 
\put(5.15,1.3){\scriptsize $n$}
\put(5.46,1.8){\begin{sideways}\tiny $3k-2$\end{sideways}}
\put(5.76,2.04){\begin{sideways}\tiny $3k$\end{sideways}}
\put(6.06,1.8){\begin{sideways}\tiny $3k-2$\end{sideways}}
\put(4.53,3.0){$\overbrace{\qquad\ \ }^{(2k-1)\text{-th block}}$}

\put(6.67,1.1){\begin{sideways}\tiny $3k$\end{sideways}}
\put(6.61,2.7){\small $I_k$}

\put(7.2,0.7){\usebox{\blockTwo}}
\put(7.25,1.3){\scriptsize $n$} 
\put(7.56,1.8){\begin{sideways}\tiny $3k-1$\end{sideways}}
\put(7.86,2.04){\begin{sideways}\tiny $3k$\end{sideways}}
\put(8.16,1.8){\begin{sideways}\tiny $3k-1$\end{sideways}}
\put(6.93,2.7){$\overbrace{\qquad\ \ }^{2k\text{-th block}}$}

\put(8.77,2.04){\begin{sideways}\tiny $3k$\end{sideways}}

\multiput(9.3,0.35)(0,0.2){14}{\line(0,1){0.1}}
\put(3.7,0.2){\small Main Part}
\put(10.2,0.2){\small ``Last Player'' Part}1

\put(9.6,0.7){\usebox{\blockLast}} \put(10.5,0.7){\usebox{\blockLast}}
\put(11.4,1.2){\line(1,0){0.3}} \put(11.7,0.7){\line(0,1){0.5}}
\put(9.65,1.3){\scriptsize $n$} \put(9.97,2.64){\scriptsize 1}
\put(10.27,2.64){\scriptsize 2} \put(10.55,1.3){\scriptsize $n$}
\put(10.87,2.64){\scriptsize 4} \put(11.17,2.64){\scriptsize 5} \put(11.45,1.3){\scriptsize $n$}

\put(11.9,0.9){\dots}

\put(12.6,0.7){\usebox{\blockLast}} \put(13.5,1.2){\line(1,0){0.3}} \put(13.8,0.7){\line(0,1){0.5}}
\put(12.65,1.3){\scriptsize $n$}
\put(12.96,2.64){\begin{sideways}\tiny $3k-2$\end{sideways}}
\put(13.26,2.64){\begin{sideways}\tiny $3k-1$\end{sideways}}
\put(13.55,1.3){\scriptsize $n$}

\end{picture}} \caption{Preferences of all players when $n=3k+1$}\label{fig:prefeg2}
\end{center}
\end{figure}

We can now argue, similarly to the case of $n=4$, that in any envy-free division of this cake, player $n$ gets a piece of value no more than $\frac{1}{n}$; this gives an upper bound of $\frac{1}{n}$ on the egalitarian welfare. Otherwise, one option is that player $n$ gets a piece containing a complete block from the main part; such a piece is worth more than $\frac{1}{2}$ to some player and makes her envy player $n$. The only other option is that player $n$ gets a piece intersecting two of her desired intervals from the ``last player'' part, and in this case similar reasoning as in Lemma~\ref{lem:player4} shows that again some player necessarily gets envious.

However, suppose that we discard of all the intervals $I_j$, $1\leq j\leq k$. Similarly to the case of $n=4$, this allows us to give the entire ``last player'' part to player $n$, for $1\leq j\leq k$ and $i\in\{1,2\}$ give the entire $(2j - i + 1)$-th block to player $3j-i$, and give each remaining interval to the single player $3j$ who desires it. This gives each player $3j-2$ or $3j-1$ a piece of value $\frac{1 +\epsilon}{2}$, each player $3j$ a piece of value $\frac{1-\epsilon}{3}$, and player $n$ a piece of value $\frac{k+1}{n} > \frac{1}{3}$. Again, it is easy to observe that this division causes no envy, and since its egalitarian welfare is $\frac{1-\epsilon}{3}$ this completes the proof, as we have shown an improvement of $\frac{(1-\epsilon)n}{3}$.
\end{proof}

\subsection{Tight Lower Bounds for $n\leq 4$}

For very small values of $n$, we can show that the upper bound of $\frac{n}{2}$ on the egalitarian dumping paradox is indeed tight.

\begin{theorem}
For $n\leq 4$ players, there are examples where the egalitarian dumping paradox is arbitrarily close to $\frac{n}{2}$.
\end{theorem}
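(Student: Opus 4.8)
The plan is to handle the cases $n=2,3,4$ one at a time, since for so few players the placement of connected pieces admits only finitely many envy patterns, which can be checked by hand. The general construction behind Theorem~\ref{thm:dueg} only reaches the ratio $\frac n3$, so a more economical, hand-tailored instance is needed for each small $n$. For $n=2$ the statement is degenerate: by Proposition~\ref{pro:de-pof} the egalitarian dumping paradox is bounded by the egalitarian Price of Envy-Freeness, which for two players equals $1=\frac n2$ (every envy-free division of a two-player cake is proportional and already egalitarian-optimal), so the bound is attained only in the limiting sense and the real content lies in $n=3$ and $n=4$, where we aim for paradoxes approaching $\frac32$ and $2$ respectively. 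In each case the template is the one from the $n=4$ example preceding Theorem~\ref{thm:dueg}: a designated player whose only route to a large share is blocked in every \emph{complete} envy-free division but opens up once a small sliver is discarded. The matching upper bound of $\frac n2$ is again supplied by Proposition~\ref{pro:de-pof} together with Theorem~5 of~\cite{AD10}.

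Concretely, for each of $n=3,4$ I would fix a small $\epsilon>0$ and build an instance with a ``main part'' holding one ``block'' per regular player (worth slightly more than $\frac12$ to its owner) and a ``last player'' part catering to the chosen player, with thin separating intervals $I_j$ between blocks. Two properties must be established. \emph{(Upper side.)} In every complete envy-free division some player receives value at most $\frac1n+O(\epsilon)$, so the egalitarian welfare is at most $\frac1n+O(\epsilon)$. \emph{(Lower side.)} After discarding the $I_j$ (a constant number of intervals), there is a partial envy-free division giving \emph{every} player value at least $\frac12-O(\epsilon)$. Together these yield a paradox of $\frac{\frac12-O(\epsilon)}{\frac1n+O(\epsilon)}\to\frac n2$ as $\epsilon\to0$.

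The upper side I would prove exactly in the spirit of Lemma~\ref{lem:player4}: enumerate the placements of a piece that would give some player more than $\frac1n$, and show each creates envy. The two reusable mechanisms are that (i) any connected piece swallowing a whole block is worth more than $\frac12$ to its owner, who then envies, and (ii) a separating interval $I_j$ cannot be absorbed in a complete division without pushing some player's valuation of a neighbour's piece past the envy threshold, so $I_j$ must be split, which again forces envy. Because $n\le4$ there are only a handful of configurations, and each can be ruled out directly.

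The main obstacle is the lower side, and in particular reconciling it with the upper side. In the $n=4$ instance of Theorem~\ref{thm:dueg} the bottleneck after dumping is the ``type-$3$'' player, who reaches only $\frac{1-\epsilon}{3}$ because her large valuation is split across non-adjacent intervals; this is exactly what caps the ratio at $\frac43$. Yet this same player is indispensable to the upper side: the envy argument works precisely because she has no good piece to fall back on, and the moment we hand her a clean connected reservoir worth $\frac12$ she can simply take it and refuse to envy, collapsing mechanism~(ii). The crux is therefore to give every player, including this sacrificial one, a connected piece worth $\frac12-O(\epsilon)$ in the dumped division, while ensuring that before dumping the very same region is contested or blocked and so provides no escape from the envy trap in a complete division. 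Forcing a single instance to meet both demands at once is delicate; it is feasible to verify for $n\le4$ by checking the finitely many piece placements, which is why the tight bound $\frac n2$ is claimed only in this range. Once a valid instance is fixed, envy-freeness of the dumped division is immediate (each player's own piece is worth $\ge\frac12-O(\epsilon)$, while by the disjoint placement of the reservoirs no player values another's piece above $\frac12$), and letting $\epsilon\to0$ drives the paradox to $\frac n2$.
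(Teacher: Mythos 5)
Your framing is right on the periphery (the $n=2$ case is degenerate, the upper bound comes from Proposition~\ref{pro:de-pof} plus Theorem~5 of~\cite{AD10}, and the task reduces to exhibiting instances for $n=3$ and $n=4$ with complete-division egalitarian welfare $\approx\frac1n$ and post-dumping welfare $\approx\frac12$), but the core of the proof is missing: you never exhibit the instances. You correctly diagnose that the block-and-$I_j$ template from Theorem~\ref{thm:dueg} cannot work --- its ``type-3'' player is capped at $\frac{1-\epsilon}{3}$ after dumping, which is exactly why that construction tops out at $\frac n3$ --- and then you resolve the tension you identify (``the sacrificial player must have a reservoir worth $\frac12$ after dumping, yet no escape route before dumping'') only by asserting that ``it is feasible to verify for $n\le4$ by checking the finitely many piece placements.'' That assertion is the entire content of the theorem; without concrete valuations there is nothing to check, and it is not even clear the block template admits such a fix, since handing the bottleneck player a connected reservoir worth $\frac12$ destroys precisely the envy mechanism your upper side relies on.

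The paper escapes this bind by abandoning the block template altogether: there is no sacrificial player whose value is spread across non-adjacent intervals. For $n=3$, player $1$ values the tiny end slivers $(0,\epsilon)$ and $(1-\epsilon,1)$ at $\approx\frac12$ each (plus a sliver near $\frac23$), while players $2$ and $3$ are \emph{uniform}. In any complete division the rightmost piece must go to player $1$ (a uniform player holding it would hold physical size $\ge\frac13$, hence value $\ge\frac12+\epsilon$ to player $1$), and since the leftmost piece then contains $(0,\epsilon)$, player $1$'s piece must contain $(\frac23+2\epsilon,1)$; the two uniform players are squeezed into the remainder, so one gets at most $\frac13+\epsilon$. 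Discarding the single sliver $(1-\epsilon,1)$ removes player $1$'s envy constraint entirely: she takes $(0,\epsilon)$, worth $\frac12-\epsilon$, and the uniform players split nearly the whole cake into physical halves, each worth $\approx\frac12$ automatically --- no contested ``reservoir'' needs to be engineered, because uniform players convert freed physical length directly into utility. The $n=4$ case adds a second structured player (values concentrated near $0$, $\frac14$, and $\frac12$) so that \emph{both} end pieces are pinned in complete divisions and the two uniform players are confined to a middle region of length $\approx\frac12$, giving the bound $\frac14+2\epsilon$; again only the single interval $(1-\epsilon,1)$ is discarded. This ``concentrated slivers at the ends plus uniform fillers'' design is the idea your proposal lacks, and it is what lets the upper and lower sides coexist in one instance.
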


\begin{proof}
For $n=2$, the upper bound implies that there is no egalitarian dumping paradox. It thus remains to prove the cases $n=3$ and $n=4$.

\paragraph{3 players.}
Fix some small $\epsilon > 0$. Player $1$ values the interval $(0,\epsilon)$ (her ``favorite interval'') as worth $\frac{1}{2} - \epsilon$, the interval $(1-\epsilon,1)$ (her ``second-favorite interval'') as worth $\frac{1}{2} - 2\epsilon$, and the interval $(\frac{2}{3},\frac{2}{3} + 3\epsilon)$ as worth $3\epsilon$. Players $2$ and $3$ value the entire cake uniformly.

We first note that in any complete envy-free division, none of the last two players (each of which has to get a piece of physical size at least $\frac{1}{3}$) can receive the rightmost part of the cake; such a piece is worth at least $\frac{1}{2} + \epsilon$ to player $1$ and will make her envy any other player who gets it. This implies that player $1$ must get the rightmost piece of the cake, and so the leftmost piece is given to some player $i\in\{2,3\}$. This leftmost piece (which again must be of physical size at least $\frac{1}{3}$) is worth $\frac{1}{2} - \epsilon$ to player $1$, and in order to avoid envy, her (rightmost) piece must be worth at least that much. We thus conclude that in any envy-free division player $1$ must get a piece containing the interval $(\frac{2}{3}+2\epsilon,1)$ (worth $\frac{1}{2}-\epsilon$ to her), leaving the two other players to share the remainder of the cake; each of them will get value of at most $\frac{1}{3} + \epsilon$, which is also the egalitarian welfare of such a division.

Now, consider the following partial division. We give the interval $(0,\epsilon)$ to player 1, the interval $(\epsilon,\frac{1}{2})$ to player $2$, $(\frac{1}{2},1-\epsilon)$ to player $3$, and discard the interval $(1-\epsilon,1)$. This is clearly an envy-free (partial) division, giving every player value of exactly $\frac{1}{2} - \epsilon$, which is therefore the egalitarian welfare; the ratio between these two welfare values is $\frac{3-6\epsilon}{2+6\epsilon}$, which approaches $\frac{n}{2} = \frac{3}{2}$ as $\epsilon \rightarrow 0$. 

\paragraph{4 players.}
Fix some $\epsilon > 0$. Player $1$ values the interval $(0,\epsilon)$ as worth $\frac{1}{2} - \epsilon$, the interval $(\frac{3}{4},\frac{3}{4} + 3\epsilon)$ as worth $3\epsilon$, and the interval $(1-\epsilon,1)$ as worth $\frac{1}{2} - 2\epsilon$. Player $2$ values $(\epsilon,2\epsilon)$ as worth $3\epsilon$, $(\frac{1}{4}-\epsilon,\frac{1}{4})$ as worth $\frac{1}{2} - 2\epsilon$, and $(\frac{1}{2},\frac{1}{2} + \epsilon)$ as worth $\frac{1}{2} - \epsilon$. Players $3$ and $4$ value the entire cake uniformly. We illustrate the preferences of players 1 and 2 in Figure~\ref{fig:pref}.

\begin{figure}[bth]
\begin{center}
\framebox[1.1\width]{

\setlength{\unitlength}{9mm}
\begin{picture}(12.3,4)

\put(0,1){\line(1,0){12}}
\put(0,0.8){\line(0,1){0.4}}
\put(12,0.8){\line(0,1){0.4}}
\put(-0.1,0.4){$0$} \put(11.9,0.4){$1$}

\put(3,0.9){\line(0,1){0.2}} 
\put(6,0.9){\line(0,1){0.2}} 
\put(9,0.9){\line(0,1){0.2}} 

\put(2.89,0.4){$\frac{1}{4}$} \put(5.89,0.4){$\frac{1}{2}$}  \put(8.89,0.4){$\frac{3}{4}$}

\put(0,1){\line(0,1){2.1}} \put(0,3.1){\line(1,0){0.6}} 
\put(0.6,1){\line(0,1){2.1}} \put(0.6,2){\line(1,0){0.6}} 
\put(1.2,1){\line(0,1){1}} 
\put(0.2,3.2){$1$} \put(0.8,2.1){$2$}
\put(0.07,1.05){\begin{sideways} \small $\frac{1}{2} - \epsilon$\end{sideways}}
\put(0.73,1.35){\small $3\epsilon$}

\put(2.4,1){\line(0,1){2.0}} \put(2.4,3){\line(1,0){0.6}} 
\put(3,1){\line(0,1){2.0}} 
\put(2.6,3.1){$2$}
\put(2.47,1.05){\begin{sideways} \small $\frac{1}{2} - 2\epsilon$\end{sideways}}

\put(6,1){\line(0,1){2.1}} \put(6,3.1){\line(1,0){0.6}} 
\put(6.6,1){\line(0,1){2.1}} 
\put(6.2,3.2){$2$}
\put(6.07,1.05){\begin{sideways} \small $\frac{1}{2} - \epsilon$\end{sideways}}

\put(9,1){\line(0,1){0.6}} \put(9,1.6){\line(1,0){1}} 
\put(10,1){\line(0,1){0.6}} 
\put(9.4,1.7){$1$}
\put(9.33,1.18){\small $3\epsilon$}

\put(11.4,1){\line(0,1){2.0}} \put(11.4,3){\line(1,0){0.6}} 
\put(12,1){\line(0,1){2.0}} 
\put(11.6,3.1){$1$}
\put(11.47,1.05){\begin{sideways} \small $\frac{1}{2} - 2\epsilon$\end{sideways}}

\end{picture}} \caption{Preferences of players $1$ and $2$. \label{fig:pref}}
\end{center}
\end{figure}
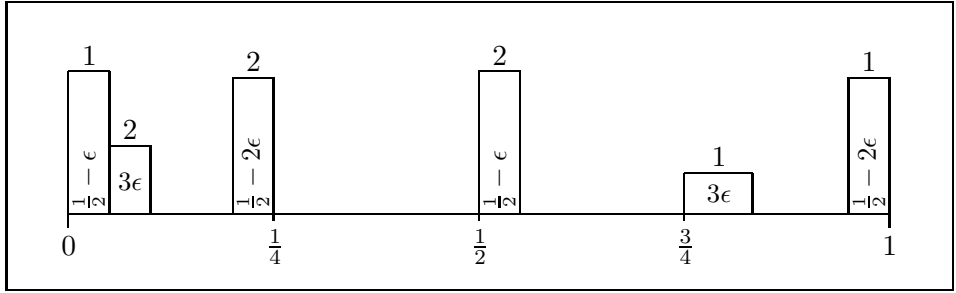

We first observe that if any player other than $1$ receives the rightmost piece, this player must receive a piece of physical size at least $\frac{1}{4}$; such a piece is worth $\frac{1}{2} + \epsilon$ for player $1$, and will make her envious of the player who got it. We conclude that the rightmost piece must therefore be given to player $1$. We further observe that the leftmost piece (which clearly cannot be also given to player $1$) must contain the interval $(0,\epsilon)$ and therefore worth $\frac{1}{2}-\epsilon$ to player 1. Thus, in order for player $1$ to avoid envy, she must get the rightmost piece, and this piece must contain the interval $(\frac{3}{4} + 2\epsilon,1)$.

We now consider the leftmost piece: If this piece is given to player $3$ or $4$, it must be of physical size at least $\frac{1}{4}$, and thus worth $\frac{1}{2} + \epsilon$ to player $2$, who will then envy that player. We thus conclude that player $2$ must receive the leftmost piece, and this piece must (strictly) contain the interval $(0,\frac{1}{4} - \epsilon)$. This implies that players $3$ and $4$ have only an interval contained in $(\frac{1}{4} - \epsilon,\frac{3}{4} + 2\epsilon)$ to share between them; in such a division, neither of them can get a piece worth (in her eyes) more than $\frac{1}{4} + 2\epsilon$, and so this is a bound on the maximum egalitarian welfare in any envy-free division of this cake.

Consider, in contrast, the following partial division: we give the interval $(0,2\epsilon)$ to player $1$, the interval $(2\epsilon,\frac{1}{2})$ to player $3$, the interval $(\frac{1}{2},\frac{1}{2} + \epsilon)$ to player $2$, $(\frac{1}{2}+\epsilon,1-\epsilon)$ to player $4$, and discard the interval $(1-\epsilon,1)$. It can be easily verified that this (partial) division is envy-free, and that it gives each of the players utility of at least $\frac{1}{2} - 2\epsilon$. The ratio between these two welfare values is $\frac{2 - 8\epsilon}{1 + 8\epsilon}$, which approaches $\frac{n}{2}=2$ as $\epsilon \rightarrow 0$. 
\end{proof}

\section{Pareto-Dominant Partial Divisions}

A division $x$ is said to \emph{Pareto dominate} another division $y$ if for all $i$, $u_i(x,i)\geq u_i(y,i)$, and at least one of these inequalities is strict; in other words, if at least one player does better in $x$ than in $y$, and no one does worse. $x$ \emph{strictly Pareto dominates} $y$ if for all $i$, $u_i(x,i) > u_i(y,i)$, i.e.~if \emph{everyone} is doing better in $x$.

We first show that starting from \emph{any} envy-free complete division it is impossible to strictly improve the utility of all players simultaneously. 

\begin{theorem}
Let $x$ be an envy-free complete division. Then there is no other division, partial or complete, that strictly Pareto dominates $x$.
\end{theorem}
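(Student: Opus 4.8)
The plan is to argue by contradiction, reducing everything to a pigeonhole count on the cut points of $x$. Write the complete connected division $x$ as a partition of $[0,1]$ into the $n$ tiles $T_m = [c_{m-1},c_m]$ ($1\le m\le n$) determined by cut points $0=c_0<c_1<\cdots<c_n=1$, where each player receives one tile; denote player $i$'s tile by $X_i$, so $u_i(x,i)=v_i(X_i)$. The crucial feature is that there are exactly $n-1$ \emph{interior} cut points $c_1,\dots,c_{n-1}$. Suppose, toward a contradiction, that some division $y$ (partial or complete), giving player $i$ the interval $Y_i$, strictly Pareto dominates $x$, i.e.\ $u_i(y,i)>u_i(x,i)$ for every $i$.

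The key step is to show that each $Y_i$ must \emph{straddle} at least one interior cut point of $x$. First I would record that envy-freeness of $x$ says precisely that player $i$ values her own tile at least as much as any tile: $u_i(x,i)\ge v_i(T_m)$ for all $m$. Summing over the tiles gives $\sum_m v_i(T_m)=v_i([0,1])=1$, so $u_i(x,i)=\max_m v_i(T_m)\ge \frac1n>0$; combined with $u_i(y,i)>u_i(x,i)$ this guarantees each $Y_i$ has positive value and is a genuine, nondegenerate interval. Now if some $Y_i$ were contained in a single tile $T_m$, then $u_i(y,i)=v_i(Y_i)\le v_i(T_m)\le u_i(x,i)$, contradicting strict domination. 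Hence no $Y_i$ sits inside one tile; since the tiles partition $[0,1]$ and $Y_i=(l_i,r_i)\subseteq[0,1]$, taking the index $m$ with $l_i\in[c_{m-1},c_m)$ forces $r_i>c_m$ and therefore $c_m\in(l_i,r_i)=Y_i$ with $1\le m\le n-1$. So each $Y_i$ contains an interior cut point in its interior.

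Finally I would invoke disjointness: the intervals $Y_1,\dots,Y_n$ are pairwise disjoint, so the interior cut points they contain are pairwise distinct. This produces $n$ distinct points drawn from the set $\{c_1,\dots,c_{n-1}\}$ of size $n-1$ — the desired contradiction, completing the proof.

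I expect the only delicate point (the ``main obstacle'') to be the middle step: getting the inequalities arranged so that strictness is preserved and the degenerate possibilities ($Y_i$ empty, a point, or tucked inside a single tile) are genuinely excluded. The positivity bound $u_i(x,i)\ge\frac1n$ disposes of degeneracy, and envy-freeness is exactly what upgrades ``$Y_i$ is not contained in $X_i$'' to ``$Y_i$ is not contained in \emph{any} tile.'' Everything else is routine; note in particular that the argument never uses completeness or envy-freeness of $y$, which is why it covers partial dominating divisions as claimed.
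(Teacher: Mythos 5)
Your proof is correct and takes essentially the same route as the paper: envy-freeness of $x$ forces any strictly dominating piece $Y_i$ to contain one of the only $n-1$ interior cut points (boundaries) of the connected division $x$, and disjointness of the $Y_i$ plus pigeonhole gives the contradiction. Your write-up simply makes explicit the details (nondegeneracy of each $Y_i$, and distinctness of the straddled cut points via disjointness) that the paper compresses into its quoted observation.
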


\begin{proof}
Our proof hinges on the following observation, 
due to~\cite{AD10}:

\begin{quote}
\begin{itshape}
Let $y$ be a division such that $u_i(y,i) > u_i(x,i)$ for some $i\in[n]$. Since $i$ values any other piece in the division $x$ at most as much as her own, it has to be that in $y$, $i$ gets an interval that intersects pieces that were given to at least two different players in $x$ (possibly including $i$ herself).
\end{itshape}
\end{quote}
In other words, in order for a player $i$ to get a piece worth more than her piece in $x$, she must get at least one ``boundary'' (between two consecutive pieces) from $x$. Thus, a (partial or complete) division that strictly Pareto dominates $x$ must give (at least) one such boundary to each player. However, since $x$ is a connected division it contains only $n-1$ boundaries, one less than the number of players. 
\end{proof}



It it thus interesting that there {\em do} exist instances in which an envy-free partial division (non-strictly) Pareto dominates \emph{every} envy-free complete division. Moreover, in some cases the partial division improves the utility of almost all the players, and by a significant (constant) factor.

\begin{theorem}
For every $n > 2$, there exists a cake cutting instance with $n$ players and an envy-free partial division giving $n-2$ players \emph{twice} the value they would get in any envy-free complete division, while giving the remaining two players at least as much as they would get in any envy-free complete division.
\end{theorem}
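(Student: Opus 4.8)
The plan is to construct, for each $n>2$, a cake instance that generalizes the two-player example of the introduction (Figure~\ref{fig:cake-intro}): two ``anchor'' players $1,2$ together with $n-2$ ``beneficiary'' players $3,\dots,n$, arranged in a chain of gadgets so that \emph{one} discarded interval simultaneously unlocks a doubling for every beneficiary. Concretely, I would give the two anchors (essentially) uniform valuations supported on a ``backbone,'' and give beneficiary $j$ a pair of thin ``candy'' strips, each worth a fixed value $v$, positioned so that in the canonical complete division the $j$-th cut is forced to fall exactly at the midpoint of beneficiary $j$'s candy region --- mirroring the centered-candy mechanism that makes the middle cut the only envy-free one in the introduction. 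Laying the strips out left-to-right in a staircase pattern, shifting every cut by one ``slot'' moves each beneficiary's interval from containing a single strip (value $v$) to containing both strips (value $2v$).

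The first main step is a rigidity (forcing) lemma: in every envy-free complete division each beneficiary gets exactly $v$, which in turn pins the anchors to a fixed value. As in the proofs of Lemma~\ref{lem:player4} and Lemma~\ref{lem:u-ef-chosen}, I would argue by contradiction. Envy-freeness implies proportionality, so each uniform anchor must receive a piece of physical size at least its fair share; combined with the requirement that every piece be a single connected interval, this leaves too little room for any beneficiary to capture both of its strips at once. Any division in which some beneficiary $j$ grabbed both strips would force an adjacent piece (an anchor, or a neighboring beneficiary) to be physically too small, so its owner would value beneficiary $j$'s piece more than her own, producing envy; a symmetric argument rules out a beneficiary getting less than $v$. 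Hence $v$ is both the floor and the ceiling on each beneficiary's value over \emph{all} envy-free complete divisions.

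The second step is the discard-and-slide argument. I would discard the single interval at one end of the chain that both anchors value at $0$ --- its role is purely positional, namely it is what forces the wasteful cuts in the complete case. With that interval removed, all cuts slide one slot toward the discarded end; each beneficiary's interval then contains both of its strips and is worth exactly $2v$, while the anchors' pieces can be reapportioned to preserve their physical sizes and hence their values (since they assign value $0$ to the discarded interval, they lose nothing). The final, routine step is to check that this partial division is envy-free: each beneficiary now holds all of her desired cake and so envies no one, and each anchor, retaining a fair-share-sized portion of the backbone, envies neither the other anchor nor the candy-only beneficiary pieces.

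The hard part will be the rigidity lemma, since I must exclude not merely the ``obvious'' alternatives but \emph{every} envy-free complete division, over all placements of the $n-1$ cuts. The cleanest control is to combine the boundary-counting observation used just above (a connected division has only $n-1$ boundaries) with the proportionality-forced physical sizes of the two uniform anchors; choosing the exact strip positions and values so that these two constraints genuinely prevent any beneficiary from ever exceeding $v$ --- while still leaving room for the simultaneous slide after the single discard --- is the delicate design point on which the whole construction rests.
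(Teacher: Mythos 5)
Your proposal is a design sketch whose central claim --- the ``rigidity lemma'' --- is never established, and you concede as much (``the delicate design point on which the whole construction rests''). That lemma is the entire mathematical content of the theorem: one must exhibit a \emph{concrete} instance and show that \emph{every} envy-free complete division, over all placements of the $n-1$ cuts, gives each beneficiary exactly $v$. Your sketch fixes neither the strip positions, nor the value $v$, nor the anchors' supports, so there is nothing to verify. Worse, the specific mechanisms you invoke do not obviously deliver it. The boundary-counting observation ($n-1$ boundaries for $n$ players) is what the paper uses to rule out \emph{strict} Pareto domination of a fixed division; it constrains improvements relative to a given division, not where cuts can lie in an arbitrary envy-free complete division, so it gives you no leverage here. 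Giving each beneficiary \emph{two} strips with the forced cut between them creates precisely the failure mode you must exclude: an envy-free complete division in which some beneficiary's interval swallows both of her strips while the displaced players are compensated elsewhere; and with \emph{two} uniform anchors rather than one, the size constraints you hope to exploit are weaker, not stronger. There is also an internal tension in your discard step: you insist the discarded interval be worth $0$ to both anchors, which forces the anchors to be uniform only on a ``backbone''; but then a piece's physical size no longer equals its anchor value, undercutting the very proportionality-forces-size argument your rigidity plan rests on.

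For contrast, the paper resolves all of this with \emph{one} uniform player and a \emph{single} strip per focused player: player $n$ is uniform on all of $[0,1]$, and each player $i\leq n-1$ desires only $(\frac{i}{n}-\epsilon,\frac{i}{n}+\epsilon)$, of value $1$. Pigeonhole forces player $n$'s piece to have physical size at least $\frac{1}{n}$; such a piece containing a neighborhood of any point $\frac{i}{n}$ would be worth more than $\frac{1}{2}$ to player $i$, so player $n$'s piece must be exactly some $(\frac{i-1}{n},\frac{i}{n})$, whence \emph{every} piece has size exactly $\frac{1}{n}$ and all cuts sit at the points $\frac{i}{n}$ --- bisecting every strip. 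Thus in every envy-free complete division each focused player gets exactly $\frac{1}{2}$ and player $n$ gets $\frac{1}{n}$, with no case analysis over cut placements needed. The slide then shrinks pieces to size $\frac{1}{n}-\epsilon$ (discarding total length $n\epsilon$ at the right end), so players $2$ through $n-1$ capture whole strips (value $1$, i.e.~double), while player $1$ and player $n$ are unharmed; the second non-improving player is a focused player, not a second anchor. Note that the discarded piece is worth $n\epsilon>0$ to the uniform player --- harmless, since envy is only toward \emph{allocated} pieces --- so your value-$0$ requirement on the discard is unnecessary and is exactly what pushed you into the non-uniform-anchor complication. To salvage your approach you would essentially converge to this design: one uniform player to pin all piece sizes, single centered strips rather than pairs, and a positively-valued discard.
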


\begin{proof}
Let $n > 2$, and fix some $0 < \epsilon < \frac{1}{n(n+1)}$. Consider the following valuations:
\begin{itemize}
\item Each player $1 \leq i \leq n-1$ (``focused players'') desires only the interval $(\frac{i}{n}-\epsilon,\frac{i}{n}+\epsilon)$, and considers it to be of value $1$.

\item Player $n$ assigns a uniform valuation to the entire cake.
\end{itemize}

\noindent Now, for envy-free complete division of the above cake, it must be that: 


\begin{enumerate}

\item \emph{Player $n$ gets a piece of physical size $\geq \frac{1}{n}$.}

Since we give away all the cake, some player must get a piece of physical size at least $\frac{1}{n}$; if player $n$ does not get such a piece, she will envy that player.

\item \emph{Player $n$ cannot get any piece containing some neighborhood of a point $\frac{i}{n}$ for $i\in[n-1]$.} \\
Because of the previous observation, if player $n$ gets such a piece then her piece contains the interval  $(\frac{i}{n}-\delta,\frac{i+1}{n}-\delta)$ for some $0 < \delta < 1$. Such a piece is always worth strictly more than $\frac{1}{2}$ to player $i$, and will make her envious.

Therefore, \emph{player $n$ must get a piece of the form $(\frac{i-1}{n},\frac{i}{n})$ for some $i\in[n]$}.

\item \emph{Every ``focused player" has to get a piece of physical size exactly $\frac{1}{n}$.} \\
First, it is clear that if some player $i\in[n-1]$ gets a piece of size $>\frac{1}{n}$, player $n$ will envy that player. On the other hand, we have that the players $[n-1]$ have to share cake of total physical size $\frac{n-1}{n}$; since none of them can get a piece of size larger than $\frac{1}{n}$, each of them must get a piece of size exactly $\frac{1}{n}$.


\end{enumerate}

From these observations we obtain that in any envy-free division, all the cuts are at points $\frac{i}{n}$ with $i\in[n-1]$; in such a division, player $n$ always has utility $\frac{1}{n}$, and every other player has utility $\frac{1}{2}$. 


Now consider the following partial division of the cake. First, give player $n$ the piece $(0,\frac{1}{n})$, and player $1$ the piece $(\frac{1}{n},\frac{2}{n} - 2\epsilon)$. Then give each player $2\leq i\leq n-1$ the next piece of size $\frac{1}{n} - \epsilon$, which is the interval $\left( i\cdot (\frac{1}{n} - \epsilon), (i+1)\cdot (\frac{1}{n} - \epsilon)\right)$. Finally, we throw away the (non-allocated) remainder.


In this division, player $n$ has value $\frac{1}{n}$, which is just as good as in any complete division. Similarly, player $1$ has value $\frac{1}{2}$, which again is as good as she can get in any complete division. Players $2$ through $n-1$, on the other hand, get each her entire desired interval; otherwise, the position of the right boundary of player $(n-1)$'s piece must be to the left of the point $\frac{n-1}{n} + \epsilon$. However, since we took $\epsilon < \frac{1}{n(n+1)}$, we have that the right boundary of player $(n-1)$'s piece is at 
\begin{equation*}
\left(\frac{1}{n} - \epsilon\right)\cdot n > \left(\frac{1}{n} - \frac{1}{n(n+1)}\right)\cdot n = \frac{n}{n+1} = \frac{n^2 - 1}{n(n+1)} + \frac{1}{n(n+1)} > \frac{n-1}{n} + \epsilon \;.
\end{equation*}
Therefore, each of these players gets a piece of value $1$, which is twice what they could get in any complete division.

Finally, it is clear that none of the players feel envy: Players $2$ through $n-1$ feel no envy (having gotten all they desire in the cake). Player $n$ feels no envy since she receives the physically-largest piece in the division. Player $1$ also feels no envy as her piece has value $\frac{1}{2}$, and so no other player could have gotten a piece with a larger value for her. 
\end{proof}

We note that the construction above can be also used to show a utilitarian dumping paradox arbitrarily close to $\frac{2(n-1)+\frac{2}{n}}{(n-1) + \frac{2}{n}}$ (one need only move the leftmost boundary in the partial division to $\frac{1}{n}-\epsilon$). While this is asymptotically inferior to the bound shown in Theorem~\ref{thm:duut}, this construction is much simpler, and works for as few as two players. In fact, for $n=2$ this construction coincides with the example given in the introduction, and moreover gives a provably tight lower bound: the dumping paradox of $\frac{3}{2}-\epsilon$ we obtain in this case matches the $n-\frac{1}{2}$ upper bound on the utilitarian Price of Envy-Freeness given in~\cite{CKKK09}.

\section{Discussion and Open Problems}

In this work, we have studied the dumping paradox and its possible magnitude. We have shown that the increase in welfare when discarding some of the cake can be substantial, moving from $1/n$ to $\Theta(1)$ for egalitarian welfare and from $\Theta(1)$ to $\Theta(\sqrt{n})$ for utilitarian welfare, and have shown a Pareto improvement that improves by a factor of two all but two players. In fact, in some cases discarding some of the cake can essentially eliminate the social cost associated with fair division. 
It is interesting to note that all of our lower bound constructions have an additional nice property --- no player desires any discarded piece more than her own piece. Thus, not only do players not envy each other, but they also do not feel much loss with any discarded piece.

Several problems remain open. First, we note that while our bounds for the utilitarian and egalitarian welfare functions are asymptotically tight, there are still constant gaps which await closure. With regards to Pareto improvement, we provided a construction where all but two players improve their utility by a factor of two. An interesting open problem is to see whether a stronger Pareto increase can be obtained.  Before we do so, however, we must first define the exact criteria by which we evaluate Pareto improvements.  Possible criteria include: the number of players that increase their utility, the largest utility increase by any player, and the total utility increase of the players (= utilitarian welfare). 

More important, perhaps, is that all of our results are existential in nature, but do not provide guidance on what to do in specific cases. It is thus of interest to develop algorithms to determine what, if any, parts of the cake it is best to discard in order to gain the most social welfare, for the different welfare functions.

Finally, our work joins other recent works~\cite{CLPP10,CLP11} that imply that leaving some cake unallocated may be a useful technique in fair division algorithms. Following this direction, it may be interesting to see if discarding of some cake may also help in finding socially-efficient envy-free connected divisions. Generalizing beyond fair division, it would be be interesting to see if such an approach, of intentionally forgoing or discarding some of the available goods, can also benefit other social interaction settings.

\bibliographystyle{alpha}
\bibliography	{cakes}

\end{document}